\def\BState{\State\hskip-\ALG@thistlm}
\newtheorem{prep}{Proposition}
\newtheorem{cora}{Corollary}
\begin{document}

\title{On the SINR Distribution of SWIPT MU-MIMO with Antenna Selection}

 \author{Hadi Saki, M. Shikh Bahae \\
\IEEEauthorblockA{Institute of Telecommunications, King's College London, London, United Kingdom\\
Email: \{hadi.saki,m.sbahaei\}@kcl.ac.uk}}
\vspace{-3em}

\thispagestyle{empty}
\maketitle
\pagestyle{empty}
\begin{abstract}
In this paper, we provide a closed-form approximation of the Full-duplex Simultaneous wireless information and power transfer (SWIPT) multi-user MIMO (MU-MIMO) system signal-to-interference-and-noise ratio (SINR) distribution for the received signal at the sensor users with perfect and imperfect channel state information (CSI) and transmit antenna selection scheme at the transmitter. We also studied the SINR distribution of the uplink with Zero-Forcing beamforming (ZFBF) and antenna selection at the aggregator (AGG) receiver with transmit antenna selection.  The downlink SINR with perfect and imperfect CSI is modeled by a multivariate Beta type II distribution and the uplink with perfect and imperfect CSI is modeled with multivariate Wishart distributions. We compared the proposed distributions analytical results to the  Monte-Carlo simulations and we obtained a perfect match.


\end{abstract}

\begin{keywords}
SWIPT,MIMO,  beamforming, power splitting, energy harvesting system, full-duplex radios, antenna selection.
\end{keywords}

\section{Introduction}
\thispagestyle{empty}
Real-time and reliable data transmission is an essential requirement for next-generation autonomous platforms where power consumption, latency, and delay are important parameters . 
In terms of communication systems, research has addressed these requirements in several ways including cross layer design \cite{Shadmand2010,Nehra2010}, energy efficient communication technologies \cite{Olfat,Mahyari2015}, and improved reliable and fast communication systems. Recently, Full Duplex (FD) \cite{Towhidlou2016,Naslcheraghi2017} Simultaneous Wireless Information and Power Transfer (SWIPT) multi-user MIMO (MU-MIMO) has become a promising technology in wireless
communication systems due to its application in the short-range in-body and out-body wireless sensor networks. However, a particular challenge in implementing these methods is how to define and evaluate the outage performance and the ergodic capacity of the system. To this end deriving a simple closed-form SINR distribution is essential. Since the introduction MU-MIMO, many works have been pursued to derive a closed-form approximation of the outage capacity where the emphasis is on how to avoid the interference of the MU-MIMO network.

The closed-form expressions for pdf of the maximum eigenvalue of a product of multivariate independent complex Gaussian in single user MIMO is driven in single user MIMO Channels is derived in \cite{ShiJin2008}. Authors in \cite{Malla2017} proposed a closed-form model for the distributions of constrained SINR in MIMO under imperfect CSI at the transmitter. The work in \cite{Caire2010} studied the MU-MIMO fading broadcast channel and the achievable ergodic capacity is determined when CSI is obtained at the receivers and it is delivered to the transmitter.  
Furthermore, ZFBF is also has become a promising beamforming technique due to the performance in the imperfect channel knowledge availability compared to the matched filter (MF),  block diagonalization (BD) and singular value decomposition (SVD) beamforming techniques. SNR distribution of the uplink MIMO systems with ZFBM with perfect and imperfect CSI is calculated in \cite{Nosrat-Makouei2011}.  Generalized Selection Criterion performance evaluation of ZF-Precoded MU-MIMO also studied in \cite{Lee2013}. In this work, the SNR distribution is calculated to evaluate the ergodic sum-rate of MU-MIMO. Even though the MU-MIMO SINR distribution is derived in the previous work, they have neglected the antenna selection and energy harvesting performance and little attention have been paid to performance analysis of SWIPT MU-MIMO. Thus, we propose a closed-form approximation of the downlink SWIPT MU-MIMO and uplink  MU-MIMO with ZF-precoded with received antenna selection system SINR using random matrices.
\subsection{Contributions of our Work}

 In this work, we investigate the problem in an IBFD SWIPT MU-MIMO system. Our objective is to derive a closed-form distribution of the SINR in the present of the energy harvesting and antenna selection.
The primary contributions of this paper can be summarized as follows,
\begin{itemize}
\item We derived a new closed-form multivariate Beta distribution of type II formulation for the received SINR of the SWIPT MU-MIMO at the SUs under perfect and imperfect CSI.
\item  A closed-form low complexity multivariate Wishart distribution formulation of the SWIPT MU-MIMO SINR at the AGG with ZFBM under perfect and imperfect CSI is calculated.
\end{itemize}
\subsection{Organization and Notation}
The rest of this work is as follows. The system model is given in Section II.  SINR distribution of the downlink SWIPT MU-MIMO for perfect and imperfect CSI is presented in section III and in Section IV the SINR distribution of uplink MU-MIMO with antenna selection and ZF beamforming is presented. Finally, simulation results and the conclusion are discussed in Section V and Section VI.

\textit{Notation}: We use the following notations throughout this paper; bold upper case and lower case letters are used for matrices and vectors respectively, while small normal letters are kept for scalars. $\textbf{X}^*$, $\textbf{X}^{-1}$, $\textbf{X}^{\dagger}$, $Tr(\textbf{X})$, $\textbf{X}^{\mathcal{H}}$, and $\textbf{X}^{1/2}$, are used to denote the complex conjugate, transpose, inverse, pseudo-inverse, trace, Hermitiant and the square-root of $\textbf{X}$ respectively.  $\textbf{I}$ and $\textbf{O}$ denote an identity matrix and an all-zero matrix, respectively, with appropriate dimensions; $\textbf{X} \succeq \textbf{O}$ and $\textbf{X} \succ \textbf{O}$  mean that $\textbf{X}$  is positive semi-definite and positive definite, respectively;  $\mathbb{E}[\cdot]$ denotes the statistical expectation; The distribution of a CSCG random variable with zero mean and variance σ2 is denoted as $\mathcal{C}\mathcal{N}(0,\sigma^2) $ , and $\sim$ means ‘distributed as’; $\mathbb{C}^{x\times y}$   denotes the space of $x\times y$ matrices with complex entries; $\Arrowvert \textbf{x} \Arrowvert$ denotes the Euclidean norm of a vector $\textbf{x}$; the unit-norm vector of a vector $\textbf{x}$ is denoted as  $\vec{\textbf{x}}=\textbf{x}/{\Arrowvert \textbf{x} \Arrowvert}$; the quantity min(x,y) and max(x,y) represents the minimum and maximum between two real numbers.
\section{System Model}
We consider a bidirectional FD multi-user MIMO system as shown in fig. \ref{fig:FD}, where $K$ sensor user (SU) indexed with $k \in \mathcal{K} \triangleq \{1, 2, ..., K\}$ and each sensor is equipped with $N_U$ antennas communicate with an AGG equipped with $N_T$ and $N_R$ transmit and receive antennas respectively, where similarly indexed with $n_t \in \mathcal{N_T} \triangleq \{1, 2, ..., N_T\}$ and $n_r \in \mathcal{R} \triangleq \{1, 2, ..., N_R\}$. Without loss of generality, we Assume equal number of transmit and receive antennas at the AGG and the SUs' are equipped with small number of antennas compared to the AGG, i.e. $N_T=N_R \gg N_U$. The AGG is connected to a constant power supply. Uncorrelated antennas are assumed at the AGG. The SUs are energy limited devices and harvest their energy from transmitted signal by the AGG. SUs can split the received signal by using power splinter into two different energy harvesting (EH) and information detection (ID) elements. The power splitting (PS) ratio of the $k_{th}$ SU for the EH and ID elements are denoted by $\rho$ and $1-\rho$ respectively. SUs are also equipped with a limited capacity rechargeable battery that store the harvested energy.
\section{Downlink SWIPT MU-MIMO}
\subsection{Perfect channel knowledge at the AGG}
\begin{figure}
\centering
    \includegraphics[width=0.2\textwidth]{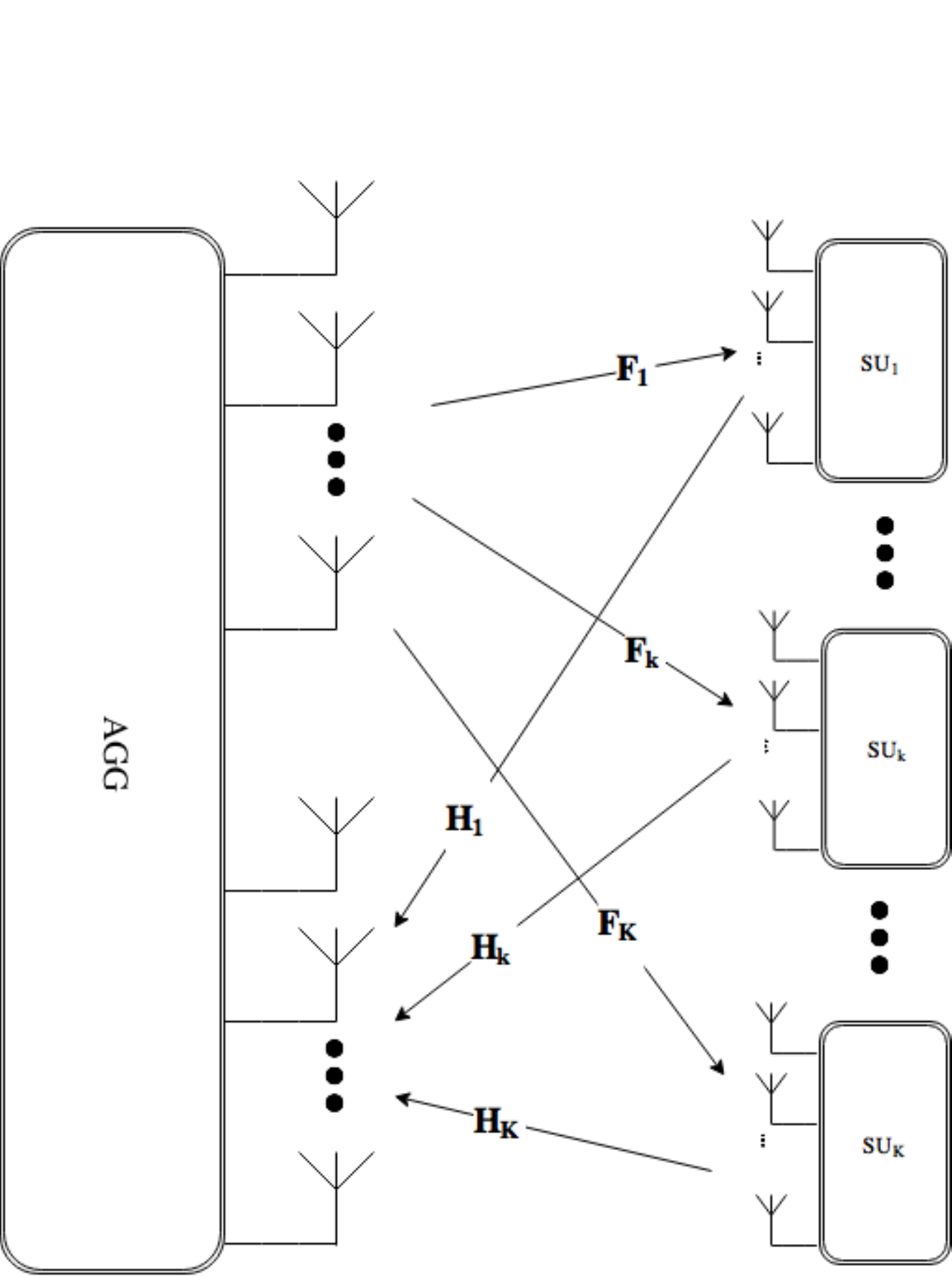}
  \caption{Full duplex mimo system model.}
  \label{fig:FD}
  \vspace{-1em}
\end{figure}
We consider an SWIPT system as shown in Fig. \ref{fig:FD}. The transmitted signal by $k_{th}$ SU is given by 
\begin{align}
\bf{x}_k^u=\bf{w}_k^u\bf{s}_k^u,
\label{eq:u_t}
\end{align}
where $\bf{x}_k^u$ is the transmitted signal, $\bf{w}_k^u \in \mathbb{C}^{N_U\times 1} $ is the beamforming vector and $\bf{s}_k^u \in \mathbb{C} $ and $\mathbb{E}[|{\bf{s}_k^u}|^2]=1$ is the transmitted information symbol. 
In this section we consider a perfect channel knowledge flat fading channel at the AGG. Therefore the received signal at the AGG can be modeled as, 
\begin{align}
\bf{y}_k^u\triangleq&~ {\sqrt{p^u}{\bf{H}_k\bf{w}_k^u\bf{s}_k^u}} +\underbrace { {\sqrt{p^u}\sum_{i\neq k, i\in K}{\bf{H}_i\bf{w}_i^u\bf{s}_i^u}}}_{{\sf\text { UL interference}}} \notag \\&+\underbrace {{\sqrt{p^d}\bf{G}^d\bf{w}^d\bf{s}^d}}_{{\sf\text { DL interference}}}+\bf{n}_k^u,
\label{eq:u_y}
\end{align}
where $p^u$ is the uplink maximum average transmit power, $\bf{H}_k \in \mathbb{C}^{N_r\times N_u}$ denotes the channel matrix between the $k_{th}$ SU and AGG. Downlink transmit power is denoted as $\sqrt{p^d}$,  and $\bf{w}^d$ is the downlink precoding matrix. $\bf{G}^d \in \mathbb{C}^{N_r\times N_t}$ is the self-interference (IS) noise at the AGG, $N_r$,  $N_t$ and $N_u$ are the number of receive and transmit antenna at the AGG and the number of FD antennas at the user respectively. $\bf{s}^d$ is the downlink symbol vector, and $\bf{n}^u \in \mathbb{C}^{N_r\times 1} $ is the additive Whit Gaussian noise (AWGN) at the AGG. 
To be able to reduce the computational complexity and also to  reduce the energy consumption at the AGG with large number of receive antenna arrays, we use the antenna selection in the AGG.  We select $N_s$ antennas out of total $N_r$ receive antennas in a way that $N_s \ll N_r$. This is  mathematically 
applied by the use of the diagonal antenna selection matrix $\bf V_k^u\in \mathbb{C}^{N_r\times N_r}$ , where $[\bf V]_{i,i} = 1$ if the antenna $i$ is selected and $[\bf V]_{i,i} = 0$ if the $i_{th}$ antenna is not selected,  $\sum_{i=1}^{N_r}[\bf {V}]_{i,i}= {N_R} $. Therefor the related received signal will be  denoted as
\begin{align}
\bf{y}_{k,s}^u\triangleq&~ {\sqrt{p^u}\bf V_k^u{\bf{H}_k\bf{w}_k^u\bf{s}_k^u}} +\underbrace { {\sqrt{p^u}\sum_{i\neq k, i\in K}{\bf V_i^u\bf{H}_i\bf{w}_i^u\bf{s}_i^u}}}_{{\sf\text { UL interference}}} \notag \\&+\underbrace {{\sqrt{p^d}\bf V^u \bf{G}^d\bf{w}^d\bf{s}^d}}_{{\sf\text { DL interference}}}+\bf V_k^u \bf{n}_k^u
\label{eq:u_y1}.
\end{align}
We denote $\bf V_k^u \bf n_k^u=\bf n_{k,s}^u$, therefore the received SINR in the AGG can be expressed as eq.\ref{eq:u_sinr_1}. 



\newcounter{storeeqcounter3}
\newcounter{tempeqcounter3}

\addtocounter{equation}{0}%
\setcounter{storeeqcounter3}{\value{equation}}%

 \begin{figure*}[!t]

 \normalsize

\setcounter{tempeqcounter3}{\value{equation}} 
\setcounter{equation}{\value{storeeqcounter3}} 

\begin{align}
{\gamma}_{k,s}^{u}=\frac{{p^u}\bf {\bf{H}_k^{\mathcal{H}}\bf{V_k^u}^{\mathcal{H}}{\bf{w}_k^u}^{\mathcal{H}}\bf{w}_k^u} \bf V_k^u\bf{H}_k}{{p^u}\sum_{i\neq k, i \in K}{ \bf{H}_i^{\mathcal{H}}\bf{V_i^u}^{\mathcal{H}}{\bf{w}_i^u}^{\mathcal{H}}\bf{w}_i^u}\bf V_i^u \bf{H}_i+{{p^d}{\bf{G}^d}^{\mathcal{H}} \bf{V^u}^{\mathcal{H}}{\bf{w}^d}^{\mathcal{H}}\bf{w}^d}\bf V^u\bf{G}^d+{\sigma_k^u}^2}.
\label{eq:u_sinr_1}
\end{align}




\begin{align}
{\gamma}_{k,s}^{u,zf}=\frac{(1-\alpha_k^2)P_k\bf{V_k^u}^\mathcal{H} \bf {w_k^u}^\mathcal{H} \bf{w}_k^u \bf V_k^u/ d_k }{  \bf {Z_k}^\mathcal{H}(tr(\alpha^2  \bf {V^u}^\mathcal{H} {\bf w^u}^\mathcal{H}  P \bf w^u \bf V^u) (\check{\bf{H}}_k^\mathcal{H} \check{\bf H}_k)^{-1}+{\sigma_n^u}^2(\check{\bf{H}}_k^\mathcal{H} \check{\bf H}_k)^{-1}   )\bf{Z}_k},
\label{eq:u_sinr5}
\end{align}

 \hrulefill

\vspace{-1em}
 \end{figure*}


The eq. \ref{eq:u_y1} can be rearranged as, 
\begin{align}
\bf y_{k,s}^u= \tilde{\bf{H}}
\begin{bmatrix}
\bf V_k^u \bf w_k^u \bf{s}_k^u \\
\sum_{i\neq k, i\in K}{\bf{C}_i}\bf{s}_i^u + \bf \tilde{\bf G}^d \bf s^d
\end{bmatrix} 
+\bf{n}_{k,s}^u,
\label{eq:u_y2}
\end{align}
where $\tilde{\bf{H}}_k$$\triangleq \left[ \bf{H}_k  \quad \bf{A}_i\right] $, $\bf{A}_i \bf{C}_i=\bf V_i^u \bf{H}_i\bf{w}_i^u, \:  \{\forall\: i\in K \: \& \: i\neq k\}$, and $\bf A_i \tilde{\bf G}^d=\bf V^u \bf G^d \bf w^d$ are the uplink and downlink interference coefficient respectively. 

To derive the received signal distribution at the AGG, we first assume that the transmit matrix is fixed, and then the zero-forcing (ZF) processing technique is used for the data reception at AGG. The ZF beamforming equalizer at the AGG receiver can be implemented by precode filter \cite{Yuk-FanHo2008}. The ZF is a standard method widely implemented in the massive MIMO processing systems \cite{Bergel2016,Hwang2016}.
Furthermore, since $\bf{H}_k$ is not correlated with $\bf A_i$ and $\bf w_k^d$, therefore 
the $N_r \times N_r$ matrix of  $[\bf H_k \quad A_i]$  is full rank. Consequently, the ZF beamforming equalizer at the AGG receiver can be expressed as,
\begin{align}
\bf U_{k}^{zf}\!\!=\!\!\left[\bf I_{d_k} \quad \bf 0 \right] \tilde{\bf H}_k^{-1},
\label{ZF}
\end{align}
where we define $\bf Z_k = \left[\bf I_{d_k} \quad \bf 0 \right]$. 
\begin{prep}
For multi user uplink FD MIMO system with perfect CSI and adopting ZF beamforming $\bf U_k^{zf}$, the processed SINR follows an $N_r\times N_r$ complex central Wishart distribution with identity covariance matrix of the  $rank(\bf{U}_k^u \bf{H_k})=d_k$, and $2N_u$ degree of freedom.
\end{prep}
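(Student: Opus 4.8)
The plan is to pass the received signal of eq.~\ref{eq:u_y2} through the equalizer $\bf U_k^{zf}$ and show that the only random object left governing the SINR is the Gram matrix of the desired channel, which I then identify as a complex central Wishart matrix. First I would left-multiply $\bf y_{k,s}^u$ by $\bf U_k^{zf}=\bf Z_k\tilde{\bf H}_k^{-1}$. Because $\bf U_k^{zf}\tilde{\bf H}_k=\bf Z_k=[\bf I_{d_k}\ \bf 0]$, the product annihilates the entire interference block of eq.~\ref{eq:u_y2} — both the UL inter-user term $\sum_{i\neq k}\bf C_i\bf s_i^u$ and the DL self-interference $\tilde{\bf G}^d\bf s^d$ — leaving
\begin{align}
\bf U_k^{zf}\bf y_{k,s}^u=\bf V_k^u\bf w_k^u\bf s_k^u+\bf Z_k\tilde{\bf H}_k^{-1}\bf n_{k,s}^u .
\label{eq:zf_out}
\end{align}

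Next I would read off the post-processing SINR as the ratio of the desired signal power to the filtered-noise power, so that all randomness is carried by the coloured-noise covariance
\begin{align}
\mathbb{E}\!\left[\bf Z_k\tilde{\bf H}_k^{-1}\bf n_{k,s}^u{\bf n_{k,s}^u}^{\mathcal{H}}\tilde{\bf H}_k^{-\mathcal{H}}\bf Z_k^{\mathcal{H}}\right]={\sigma_n^u}^2\,\bf Z_k(\tilde{\bf H}_k^{\mathcal{H}}\tilde{\bf H}_k)^{-1}\bf Z_k^{\mathcal{H}} .
\label{eq:noise_cov}
\end{align}
Partitioning $\tilde{\bf H}_k=[\bf H_k\ \bf A_i]$ and applying the block (Schur-complement) inversion, the top-left $N_u\times N_u$ block of $(\tilde{\bf H}_k^{\mathcal{H}}\tilde{\bf H}_k)^{-1}$ equals $(\bf H_k^{\mathcal{H}}\bf P^{\perp}\bf H_k)^{-1}$, with $\bf P^{\perp}=\bf I-\bf A_i(\bf A_i^{\mathcal{H}}\bf A_i)^{-1}\bf A_i^{\mathcal{H}}$ the orthogonal projector onto the complement of the interference subspace. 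Writing $\check{\bf H}_k\triangleq\bf P^{\perp}\bf H_k$ this is exactly $(\check{\bf H}_k^{\mathcal{H}}\check{\bf H}_k)^{-1}$, and $\bf Z_k$ extracts its $d_k$ decoded-stream rows and columns, recovering the Gram factor that appears in the denominator of eq.~\ref{eq:u_sinr5} (the perfect-CSI case being $\alpha_k=0$).

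Finally I would invoke the channel statistics. Since the AGG antennas are uncorrelated and the fading is flat Rayleigh, the entries of $\bf H_k$ are i.i.d.\ $\mathcal{C}\mathcal{N}(0,1)$, and $\bf H_k$ is independent of the interference matrix $\bf A_i$; as $\bf P^{\perp}$ is built from $\bf A_i$ alone, it is independent of $\bf H_k$, and projecting a white complex-Gaussian matrix leaves it complex Gaussian with identity covariance on the surviving subspace. Hence $\check{\bf H}_k^{\mathcal{H}}\check{\bf H}_k$ is a complex \emph{central} Wishart matrix with identity covariance; the $\bf Z_k$ restriction to the decoded streams fixes its rank at $\mathrm{rank}(\bf U_k^u\bf H_k)=d_k$, while counting each complex Gaussian column as two real degrees of freedom yields the stated $2N_u$ degrees of freedom embedded in the $N_r\times N_r$ ambient space.

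The hard part will be the reduction step: justifying rigorously that the Schur complement collapses $\bf Z_k(\tilde{\bf H}_k^{\mathcal{H}}\tilde{\bf H}_k)^{-1}\bf Z_k^{\mathcal{H}}$ onto $(\check{\bf H}_k^{\mathcal{H}}\check{\bf H}_k)^{-1}$ with the interference cross-blocks contributing nothing, and then verifying that the projected matrix $\check{\bf H}_k$ retains an \emph{identity} covariance so that the Wishart stays central with the claimed rank $d_k$ and $2N_u$ degrees of freedom. The independence of $\bf H_k$ from $\bf A_i$ asserted earlier in this section is exactly what makes the projection argument go through; the antenna-selection matrices $\bf V$ and the full-duplex coupling only rescale the surviving signal and do not disturb the Wishart structure of the noise covariance.
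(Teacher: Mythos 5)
Your proposal follows essentially the same route as the paper's own proof: apply $\bf U_k^{zf}=\bf Z_k\tilde{\bf H}_k^{-1}$ so that $\bf U_k^{zf}\tilde{\bf H}_k=\bf Z_k$ nulls both interference blocks, reduce the post-processing SINR to the filtered-noise covariance $\bf Z_k(\tilde{\bf H}_k^{\mathcal{H}}\tilde{\bf H}_k)^{-1}\bf Z_k^{\mathcal{H}}$, collapse that by block (Schur-complement) inversion to $(\bf H_k^{\mathcal{H}}\bf P^{\perp}\bf H_k)^{-1}$, and obtain centrality of the resulting Wishart from the independence of $\bf H_k$ and the projector built from $\bf A_i$, with the deterministic scaling $\boldsymbol\eta$-type factors absorbed at the end (the paper does this last step explicitly via theorem 3.3.11 of \cite{citeulike:624837}). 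Your use of the general projector $\bf I-\bf A_i(\bf A_i^{\mathcal{H}}\bf A_i)^{-1}\bf A_i^{\mathcal{H}}$ is in fact slightly more careful than the paper's $\bf I-\bf A_i\bf A_i^{\mathcal{H}}$, which implicitly assumes $\bf A_i$ has orthonormal columns.

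The one substantive divergence is the degrees-of-freedom bookkeeping at the very end. The paper's proof asserts that the projected Gram matrix remains Wishart with $2N_r$ degrees of freedom and covariance $\bf I_{d_k}$, whereas you assert $2N_u$ degrees of freedom (matching the proposition's wording) by counting the complex Gaussian columns. The standard projection theorem gives neither: for $\bf H_k\in\mathbb{C}^{N_r\times N_u}$ with i.i.d.\ $\mathcal{C}\mathcal{N}(0,1)$ entries and a rank-$d_k$ projector $\bf P^{\perp}$ independent of $\bf H_k$, the matrix $\bf H_k^{\mathcal{H}}\bf P^{\perp}\bf H_k$ is complex central Wishart with $d_k$ complex degrees of freedom ($2d_k$ in the paper's real-counting convention) and covariance $\bf I_{N_u}$ --- projecting reduces the effective sample count to the projector rank; it does not shrink the covariance dimension. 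So your closing step, like the paper's, needs this correction; the numerical disagreement between your $2N_u$ and the proof's $2N_r$ merely mirrors the paper's own inconsistency between the proposition statement and its proof, and it does not affect the validity of the reduction you carried out up to that point.
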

\begin{proof}
By applying the ZF beamforming to eq.\ref{eq:u_y} the $k_{th}$ user received signal at \ref{eq:u_y2} changes to,  
\begin{align}
\bf{y}_k^{u,zf} \!= \!\!\bf Z_k \tilde{H}_k^{-1}\tilde{\bf{H}}
\begin{bmatrix}
\bf V_k^u \bf w_k^u \bf{s}_k^u \\
\sum_{i\neq k, i\in K}{\bf{C}_i}\bf{s}_i^u + {\bf \tilde{\bf G}}^d \bf s^d
\end{bmatrix} 
 \!\!+ \!\!\bf Z_k \tilde{H}_k^{-1}\bf{n}_{k,S}^u,
\label{eq:u_y2}
\end{align}
and the $k_{th}$ user received SINR at eq.\ref{eq:u_sinr_1} will be,
\begin{align}
{\bf{\gamma}}_{k,s}^{u,zf}=\frac{p\bf {V_k^u}^\mathcal{H}\bf {w_k^u}^\mathcal{H} \bf{w}_k^u \bf V_k^u}{\bf Z_k(\tilde{\bf{H}}_k^\mathcal{H} \tilde{\bf H}_k)^{-1} \bf{Z}_k^\mathcal{H} d_k {\sigma_n^u}^2}.
\label{eq:u_sinr2}
\end{align}
we define  $\boldsymbol\eta_k^u= \frac{\sqrt{p}\bf V_k^u\bf {w_k^u}}{\sqrt{d_k{\sigma_n^u}^2}} $. From \cite{Watkins2010} we have,
\begin{align}
\bf Z_k(\tilde{\bf{H}}_k^\mathcal{H} \tilde{\bf H}_k)^{-1} \bf{Z}_k^\mathcal{H}=\bf Z_k\left(
\begin{bmatrix}
\bf{H}_k^{\mathcal{H}} \bf {H}_k &   \bf{H}_k ^{\mathcal{H}} \bf{A_i} \\
\bf{A_i^{\mathcal{H}}} \bf{H}_k & \bf{A}_i^{\mathcal{H}} \bf {A}_i
\end{bmatrix}
\right)^{-1}\!\! \bf{Z}_k^\mathcal{H}.
\end{align}
Since multiplication of $(\tilde{\bf{H}}_k^\mathcal{H} \tilde{\bf H}_k)^{-1}$ by $\bf Z_k$ and $\bf Z_k^{\mathcal{H}}$ only keeps the $d_k$ row and columns respectively, the  $\bf Z_k(\tilde{\bf{H}}_k^\mathcal{H} \tilde{\bf H}_k)^{-1} \bf{Z}_k^\mathcal{H}$ can be simplified to ($\bf{H}_k^{\mathcal{H}} \bf {H}_k- \bf{H}_k ^{\mathcal{H}} \bf{A_i} \bf{A_i^{\mathcal{H}}} \bf{H}_k)^{-1} =(\bf{H}_k^{\mathcal{H}}(\bf I_{N_r}-\bf A_i \bf A_i^{\mathcal{H}})\bf H_k)^{-1}$, where $(\bf I_{N_r}-\bf A_i \bf A_i^{\mathcal{H}})$ is the projection matrix of the rank $d_k$. Since , $\bf{H}_k \in \mathbb{C}^{N_r\times N_u}$ is a complex Gaussian random vector with zero mean and unit variance, $\bf{H}_k^{\mathcal{H}} \bf {H}_k $ follows a central Wishart distribution with $\bf I_{N_r}$ covariance matrix and $2N_r$ degree of freedom.
Therefore, considering   $(\bf I_{N_r}-\bf A_i \bf A_i^{\mathcal{H}})$ as a projection matrix of $d_k$, then  $(\bf{H}_k^{\mathcal{H}}(\bf I_{N_r}-\bf A_i \bf A_i^{\mathcal{H}})\bf H_k)$ distributed as Wishart distribution with $\bf I_{d_k}$ covariance matrix and $2N_r$ degrees of freedom. Therefore, according to \cite{citeulike:624837} (theorem 3.3.11), SINR has a Wishart distribution with ${\bf \boldsymbol \eta_k^u}^\mathcal{H} \bf I_{d_k} {\boldsymbol \eta_k^u}$ covariance matrix and $2N_r$ degrees of freedoms,\vspace{-0.5em}
\begin{align}
f(\gamma_{k,s}^{uzf})\!\!=\!\!\frac{{\gamma_{k,s}^{uzf}}^{(2N_r-N_u-1)/2}\!exp\!\left[\!-\frac{1}{2}tr(\bf\! {({\boldsymbol \eta_k^u}^\mathcal{H} I_{d_k} {\boldsymbol \eta_k^u})}^{-1}\gamma_{k,s}^u)\!\right]\!}{(2)^{N_rN_u}\Gamma_{N_{uzf}}(N_r)\det({\boldsymbol \eta_k^u}^\mathcal{H} \bf I_{d_k} {\boldsymbol \eta_k^u})^{N_r}},\vspace{-0.5em}
\end{align}
where $\Gamma(.)$ is a multivariate gamma function and $(N_r-N_u-1)\gg0$ with regard to Lebesque measure of positive definite cone.\vspace{-1em}
\end{proof}
\subsection{Imperfect channel knowledge at the AGG}
 In the realistic wireless communication system, due to the feedback delays and/or estimation error, actual
channel  is different from estimated channel and can be modeled as, 
\begin{align}
\bf H_k=\sqrt{1-\alpha^2}\bf \hat{H}_k+\alpha \bf \Delta_k,
\end{align}
where $\hat{H}_k\sim \mathcal{C}\mathcal{N}(0,\bf I)$ is the imperfect estimated channel with zero mean and unit variance at the AGG and $\Delta_k \sim \mathcal{C}\mathcal{N}(0, \bf I)$ is the estimated channel Gaussian noise at AGG. The received signal at the AGG after applying the antenna selection is denoted as,
\begin{align}
\bf{y}_k^u\triangleq&~ {\sqrt{p^u}{(\sqrt{1-\alpha^2}\bf \hat{H}_k+\alpha \bf \Delta_k)\bf V_k^u\bf{w}_k^u\bf{s}_k^u}} \notag \\&+ { {\sqrt{p^u}\sum_{i\neq k, i\in K}{(\sqrt{1-\alpha^2}\bf \hat{H}_i+\alpha \bf \Delta_i) \bf V_i^u\bf{w}_i^u\bf{s}_i^u}}} \notag \\&+ {{\sqrt{p^d} \bf{G}^d \bf V^u\bf{w}^d\bf{s}^d}}+\bf V_k^u\bf{n}_k^u \notag \\ 
&=\sqrt{1-\alpha_k^2} \check{H}_k
\begin{bmatrix}
\bf V_k^u \bf w_k^u \bf s_k^u \\ \\
\sum_{i\neq k, i\in K}{\bf{\hat{C}}_i}\bf{s}_i^u  + \bf \check{\bf G}^d \bf s^d
\end{bmatrix}
+ \notag \\ & \alpha_k \Delta \bf V^u\bf w^u \bf s^u+\bf n_{k,s}^u,
\label{eq:u_y_im}
\end{align}
where $\bf \check{H}_k\triangleq \left[ \bf{\hat{H}}_k  \quad \bf\check{A}_i\right]$, $\bf\check{A}_i \bf\hat{C}_i=\bf\hat{H}_i\bf V_i^u \bf{w}_i^u, \:  \{\forall\: i\in K \: \& \: i\neq k\}$, and $\bf\check A_i \check{\bf G}^d=\bf G^d \bf V^u \bf w^d / \sqrt{1-\alpha_k^2}$ are the uplink and downlink interference coefficient respectively, $\Delta=[\Delta_1,\: \Delta_2,\: .\:.\:.\: \Delta_K]$, $\bf w^u=[\bf w_1^u,\: \bf w_2^u,\: .\:.\:.\: \bf w_K^u]$ and $\bf s^u=[\bf s_1^u,\: \bf s_2^u,\: .\:.\:.\: \bf s_K^u]$. 
Similar to the eq.\ref{ZF}, the ZF beamforming equalizer at the AGG receiver can be expressed as,
\begin{align}
\bf \check U_k^{zf}=\left[\bf I_{d_k} \quad \bf 0 \right] \check{\bf H}_k^{-1}=\bf \check{Z}_k\check{\bf H}_k^{-1}.
\label{ZF2}
\end{align}
\begin{prep}
For multi user uplink FD MIMO system with imperfect CSI at AGG and adopting ZF beamforming $\bf \check {U}_k^{zf}$, the processed SINR distribution is as follows:
\begin{align}
f(\gamma_{k,s}^{uzf})\!\!=\!\!\frac{{\gamma_{k,s}^{uzf}}^{(2N_r\!-\!N_u\!-\!1)/2}\!\!exp\!\left[\!-\frac{1}{2}tr(\bf ({\boldsymbol\eta_k}^\mathcal{H}{I_{dk}}{\boldsymbol\eta_k})^{-1}\gamma_{k,s}^u)\right]}{(2)^{N_rN_u}\Gamma_{N_{uzf}}(N_r)\det(\bf {{\boldsymbol\eta_k}^\mathcal{H}I_{d_k}{\boldsymbol\eta_k}})^{N_r}},
\label{sinr_zf_i}
\end{align}
where $\boldsymbol\eta_k=\frac{\sqrt{(1-\alpha_k^2)P_k}(  \bf V_k^u \bf w_k^u)}{\sqrt{d_k(\alpha_k^2 P \mathcal{J}+\sigma_k^{u})}}$ and $\mathcal{J}=tr({\bf V^u}^ \mathcal{H} {\bf w^u}^\mathcal{H} \bf w^u \bf V^u/d)$. and $d=\sum_i=1^K d_i$
\end{prep}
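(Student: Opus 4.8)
The plan is to follow the argument of Proposition~1 step for step, the only new ingredient being that the channel estimation error $\alpha_k\Delta$ must be carried through the ZF filter and shown to act as an extra spatially white disturbance that merely inflates the noise floor. First I would substitute the imperfect-CSI model into eq.~\ref{eq:u_y_im} and apply the equalizer $\check{\bf U}_k^{zf}=\check{\bf Z}_k\check{\bf H}_k^{-1}$ of eq.~\ref{ZF2}. Since $\check{\bf Z}_k\check{\bf H}_k^{-1}\check{\bf H}_k=\check{\bf Z}_k$ and $\check{\bf Z}_k=[\bf I_{d_k}\quad\bf 0]$ annihilates the interference block built from $\hat{\bf C}_i$ and $\check{\bf G}^d$, the processed output retains only the desired term $\sqrt{1-\alpha_k^2}\,{\bf V}_k^u{\bf w}_k^u{\bf s}_k^u$ together with the filtered AWGN $\check{\bf Z}_k\check{\bf H}_k^{-1}{\bf n}_{k,s}^u$ and the filtered estimation error $\alpha_k\check{\bf Z}_k\check{\bf H}_k^{-1}\Delta{\bf V}^u{\bf w}^u{\bf s}^u$.

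Next I would evaluate the covariance of the residual estimation error conditioned on $\check{\bf H}_k$. Because $\Delta$ has i.i.d.\ $\mathcal{CN}(0,1)$ entries, is independent of $\hat{\bf H}_k$, and the symbols satisfy $\mathbb{E}[{\bf s}^u({\bf s}^u)^{\mathcal{H}}]={\bf I}$, the pre-filter error covariance reduces to $\alpha_k^2 P\,tr({\bf V^u}^{\mathcal{H}}{\bf w^u}^{\mathcal{H}}{\bf w^u}{\bf V^u})\,{\bf I}_{N_r}$, i.e.\ spatially white. This is the crucial observation: being proportional to ${\bf I}_{N_r}$, the error shares the \emph{same} post-filter shape $\check{\bf Z}_k(\check{\bf H}_k^{\mathcal{H}}\check{\bf H}_k)^{-1}\check{\bf Z}_k^{\mathcal{H}}$ as the AWGN, so the two combine into a single effective noise of per-stream power $\alpha_k^2 P\,\mathcal{J}+\sigma_k^{u}$, with $\mathcal{J}=tr({\bf V^u}^{\mathcal{H}}{\bf w^u}^{\mathcal{H}}{\bf w^u}{\bf V^u}/d)$. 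Combining this floor with the desired gain $(1-\alpha_k^2)P_k$ then reproduces exactly the normalisation $\boldsymbol\eta_k=\sqrt{(1-\alpha_k^2)P_k}\,{({\bf V}_k^u{\bf w}_k^u)}/\sqrt{d_k(\alpha_k^2 P\,\mathcal{J}+\sigma_k^{u})}$ of the statement.

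Then I would recycle the block-inversion and projection argument verbatim. Writing $\check{\bf H}_k=[\hat{\bf H}_k\quad\check{\bf A}_i]$ and invoking \cite{Watkins2010}, the term $\check{\bf Z}_k(\check{\bf H}_k^{\mathcal{H}}\check{\bf H}_k)^{-1}\check{\bf Z}_k^{\mathcal{H}}$ collapses to $\bigl(\hat{\bf H}_k^{\mathcal{H}}({\bf I}_{N_r}-\check{\bf A}_i\check{\bf A}_i^{\mathcal{H}})\hat{\bf H}_k\bigr)^{-1}$, in which ${\bf I}_{N_r}-\check{\bf A}_i\check{\bf A}_i^{\mathcal{H}}$ is a rank-$d_k$ projection. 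As $\hat{\bf H}_k\sim\mathcal{CN}(0,{\bf I})$, the matrix $\hat{\bf H}_k^{\mathcal{H}}({\bf I}_{N_r}-\check{\bf A}_i\check{\bf A}_i^{\mathcal{H}})\hat{\bf H}_k$ is central Wishart with ${\bf I}_{d_k}$ covariance and $2N_r$ degrees of freedom; applying \cite{citeulike:624837}~(theorem~3.3.11) with the covariance rescaled by $\boldsymbol\eta_k^{\mathcal{H}}{\bf I}_{d_k}\boldsymbol\eta_k$ then yields the density stated in eq.~\ref{sinr_zf_i}.

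The main obstacle I anticipate is precisely this conditional-covariance computation, in particular justifying that averaging over $\Delta$ turns the filtered estimation error into an \emph{isotropic} contribution that merely rescales the AWGN term rather than introducing a different spatial colouring. The identity $\mathbb{E}[\Delta{\bf M}\Delta^{\mathcal{H}}]=tr({\bf M})\,{\bf I}_{N_r}$, valid for any fixed ${\bf M}$ and white $\Delta$, is what collapses the error to a scalar multiple of ${\bf I}_{N_r}$, and checking that the desired-signal and error terms do not become correlated through the shared factor $\check{\bf H}_k^{-1}$---which rests on the independence of $\Delta$ from $\hat{\bf H}_k$, hence from $\check{\bf H}_k$---is the delicate point. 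Once this is settled, the Wishart machinery of Proposition~1 transfers mechanically and the proof closes.
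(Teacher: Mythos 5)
Your proposal is correct and takes essentially the same route as the paper's own proof: you apply $\check{\bf U}_k^{zf}$, whiten the filtered estimation error via the Gaussian trace identity $\mathbb{E}[\Delta{\bf M}\Delta^{\mathcal{H}}]=tr({\bf M})\,{\bf I}$ (the paper invokes exactly this through its citation of Seber, eq.~(21.6), with the cross terms vanishing since $\Delta$ is zero-mean) so that it merges with the AWGN into the effective noise power $\alpha_k^2 P\mathcal{J}+\sigma_k^{u}$, and then reuse the block-inversion/projection argument and theorem~3.3.11 of \cite{citeulike:624837} from Proposition~1. The delicate points you flag, isotropy of the filtered error and independence of $\Delta$ from $\check{\bf H}_k$, are precisely the ones the paper's proof relies on, so there is no gap.
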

\begin{proof}
The post-processing received signal at the AGG after applying the \ref{ZF2} ZF beamforming is given by, 
\begin{align}
\bf{y}_{k,s}^{uzf}\triangleq&~\sqrt{1-\alpha_k^2} \bf Z_k {\bf{\check H_k}}^{-1}\check{H}_k
\begin{bmatrix}
\bf V_k^u \bf w_k^u \bf s_k^u \\ \\
\sum_{i\neq k, i\in K}{\bf{\hat{C}}_i}\bf{s}_i^u  + \bf \check{\bf G}^d \bf s^d
\end{bmatrix}
+ \notag \\ & \alpha \bf Z_k \bf {\check{H}}_k^{-1}\Delta \bf V^u\bf w^u \bf s^u+ \bf Z_k \bf {\check{H}}_k^{-1} n_{k,s}^u.
\label{eq:u_y_im}
\end{align}
Therefore the post-processing received SINR at the AGG for the $k_{th}$ user is given in eq.\ref{eq:u_sinr4}.












From \cite{ArthurFrederickSeber2008} (21.6), we have 
\begin{align}
\mathbb{E}(\alpha^2   ({\bf {\check{H}}_k^{-1}})^\mathcal{H}  \Delta^\mathcal{H} {\bf V^u}^\mathcal{H}  {\bf w^u}^\mathcal{H}  P \bf w^u \bf V^u \Delta {\bf {\check{H}}_k^{-1}}) = \notag \\ (tr(\alpha^2 \Delta^\mathcal{H} {\bf V^u}^\mathcal{H}  {\bf w^u}^\mathcal{H}  P \bf w^u \bf V^u \Delta)\Theta)  ({\bf {\check{H}}_k^{-1}})^\mathcal{H} {\bf {\check{H}}_k^{-1}} \notag \\ +\mathbb{E}(\Delta^\mathcal{H})(\alpha \Delta^\mathcal{H} {\bf V^u}^\mathcal{H}  {\bf w^u}^\mathcal{H}  P \bf w^u \bf V^u \Delta)\mathbb{E}(\Delta) ({\bf {\check{H}}_k^{-1}})^\mathcal{H} {\bf {\check{H}}_k^{-1}},
\end{align}
where $\Theta$ is the $\Delta$ variance matrix and $P=\mathbb{E}(\bf (s^u) ^\mathcal{H}\bf s^u)$. Since $\Delta$ is i.i.d with zero-mean and unit-variance elements, eq.\ref{eq:u_sinr4} can be simplified as eq.\ref{eq:u_sinr5}. It can be more simplified to,
\begin{align}
{\gamma}_{k,s}^{uzf}&=\frac{(1-\alpha_k^2)P_k \bf {V_k^u}^\mathcal{H}\bf {w_k^u}^\mathcal{H}  \bf{w}_k^u \bf {V_k^u}/ d_k }{  \bf {Z_k}^\mathcal{H}((\alpha^2 P \mathcal{J} +{\sigma_n^u}^2)(\check{\bf{H}}_k^\mathcal{H} \check{\bf H}_k)^{-1}   )\bf{Z}_k} \notag \\&= \frac{(1-\alpha_k^2)P_k \bf {V_k^u}^\mathcal{H} \bf {w_k^u}^\mathcal{H} \bf{w}_k^u \bf V_k^u / d_k }{ (\alpha^2 P \mathcal{J} +{\sigma_n^u}^2)(\hat{\bf{H}}_k^\mathcal{H} (\bf I_{N_r}-\bf \check{A}^\mathcal{H}\bf \check{A})\hat{\bf H}_k)^{-1}   }.
\label{eq:u_sinr5}
\end{align}
Therefore, similar to the perfect CSI case   $(\bf I_{N_r}-\bf \check{A}_i \bf \check{A}_i^{\mathcal{H}})$ is a projection matrix of rank $d_k$, then  $(\bf \check{H}_k^{\mathcal{H}}(\bf I_{N_r}-\bf \check{A}_i \bf \check{A}_i^{\mathcal{H}})\bf \check{H}_k)$ distributed as Wishart distribution with $\bf I_{d_k}$ covariance matrix and $2N_r$ degrees of freedom and also from \cite{citeulike:624837} (theorem 3.3.11) the received SINR distribution have also have a Wishart distribution with $\boldsymbol\eta_k^\mathcal{H}\bf I_{d_k} {\boldsymbol\eta_k}$ covariance and $2N_r$ degrees of freedom given in eq. \ref{sinr_zf_i}.\vspace{-1em}
\end{proof}
\section{Downlink SWIPT MU-MIMO}
\subsection{Downlink with perfect CSI }
Similar to the uplink scenario, here also we first consider a fixed beamforming at the transmitter and to derive the SINR at the AGG, the transmitted signal is expressed as,\vspace{-1.5em}
\begin{align}
\textbf{x}^d=\sum_{k=1}^K{\textbf{w}_k^d\textbf{s}_k^d},
\label{eq:d_t}
\end{align}
where $\bf{x}^d$ is the transmitted signal at the AGG, $\bf{w}_k^d \in \mathbb{C}^{N_t\times N_u} $ is the beamforming vector at the transmitter and $\textbf{s}_k^d \in \mathbb{C} $ is the transmitted information. We assume that $\mathbb{E}[|{\textbf{s}_k^d}|^2]=1$. The received signal at the $k_{th}$ SU can be expressed as, 
\begin{align} 
\bf{y}_{k}^{d}\triangleq&~\underbrace {\sqrt{p^d}{\bf{F}_k{\bf V_k^d}\bf{w}_k^d\bf{s}_k^d}} +\,\underbrace {\displaystyle \sum _{i\neq k, i \in K} \sqrt{p^d}{\bf{F}_i{\bf V_i^d}\bf{w}_i^d\bf{s}_i^d}}_{{\sf\text { DL interference}}} \notag \\[2pt]&+\,\underbrace {\displaystyle \sqrt{p^u}{\bf{G}^u\bf{w}^u\bf{s}^u}}_{{\sf\text { UL intracell interference}}} + \bf{n}_k^d. 
\end{align}
$\bf{F}_k \in \mathbb{C}^{N_u\times N_t}$ denotes the channel matrix between the AGG and the $k_{th}$ SU. $\bf{w}^u$ is the SUs' uplink beamforming matrix, and $\bf{G}^u \in \mathbb{C}^{K\times N_u}$ is the self-interference (IS) noise at the SUs. $\bf{s}^u$ is the uplink symbol vector, and $\bf{n}_k^d \in \mathbb{C}^{N_u\times 1} $ is the additive Whit Gaussian noise (AWGN) at the $k_{th}$ SU.
We assume that each SU antenna is equipped with a PS device that coordinates energy harvesting and the information decoding from the received signal. Therefore the received ID and EH signal elements are respectively denoted as,
\begin{align}
\textbf{y}_k^{di}=&\sqrt{\rho_k} [ {\sqrt{p^d}{\bf{F}_k{\bf V_k^d}\bf{w}_k^d\bf{s}_k^d}} &+\, {\displaystyle \sum _{i\neq k, i \in K} \sqrt{p^d}{\bf{F}_i{\bf V_i^d}\bf{w}_i^d\bf{s}_i^d}} \notag \\&+\, {\displaystyle \sqrt{p^u}{\bf{G}^u\bf{w}^u\bf{s}^u}}]+\bf{n}_k^s,
\label{eq:d_i}\\
\textbf{y}_k^{de}=&\sqrt{1-\rho_k}[ {\sqrt{p^d}{\bf{F}_k{\bf V_k^d}\bf{w}_k^d\bf{s}_k^d}} &+\, {\displaystyle \sum _{i\neq k, i \in K} \sqrt{p^d}{\bf{F}_i{\bf V_i^d}\bf{w}_i^d\bf{s}_i^d}} \notag \\&+\, {\displaystyle \sqrt{p^u}{\bf{G}^u\bf{w}^u\bf{s}^u}}+\textbf{n}_k^d],
\label{eq:d_e}
\end{align}
where $\textbf{n}_k^s$ is the power split processing noise at the $k_{th}$ SU. 
By assuming that full CSI is available at the AGG, the received signal to interference ratio (SINR) at $k_{th}$ SU is given and and the harvested power by $k_{th}$ SU are given in equations \eqref{eq:d_sinr} and \eqref{eq:d_energy} on top of Page~\pageref{eq:d_sinr}, respectively. Here $\eta_k \in (0,1]$ is the energy conversion efficiency of the is the user $k$ energy harvesting.
The received SINR at the $k_{th}$ user therefore can be noted as eq.\ref{eq:d_sinr}.
%
\newcounter{storeeqcounter}
\newcounter{tempeqcounter}
\addtocounter{equation}{0}%
\setcounter{storeeqcounter}{\value{equation}}%
%
\begin{figure*}[!t]
 \normalsize

\setcounter{tempeqcounter}{\value{equation}} 
\setcounter{equation}{\value{storeeqcounter}} 



\begin{align}
{\gamma}_k^{d}=\frac{{\bf{F}_k^{\mathcal{H}}\bf {V_k^d}^{\mathcal{H}}{\bf{w}_k^d}^{\mathcal{H}}\bf{w}_k^d}\bf {V_k^d}\bf{F}_k}{\left[\sum_{i=1, i\neq k}^{K}{\bf{F}_k^{\mathcal{H}}\bf {V_i^d}^{\mathcal{H}}{\bf{w}_i^d}^{\mathcal{H}}\bf{w}_i^d}\bf {V_i^d}\bf{F}_k+{{\frac{p^u}{p^d}}{\bf{G}^u}^{\mathcal{H}}{\bf{w}^u}^\mathcal{H}\bf{w}^u}\bf{G}^u+\frac{{\sigma_k^d}^2}{{p^d}}\right]+\frac{{\sigma_k^s}^2}{\rho_k{p^d}}},
\label{eq:d_sinr}
\end{align}
\begin{align}
{\gamma}_k^{d}=\dfrac{{\bf \hat{F}_k^{\mathcal{H}}\bf {V_k^d}^{\mathcal{H}}{\bf{w}_k^d}^{\mathcal{H}}\bf{w}_k^d}\bf {V_k^d}\bf\hat{F}_k}{[{{\sum_{i=1, i\neq k}^{K}{\bf{\hat{F}}_k^{\mathcal{H}}\bf {V_i^d}^{\mathcal{H}}{\bf{w}_i^d}^{\mathcal{H}}\bf{w}_i^d}\bf {V_i^d}\bf{\hat{F}}_k+\frac{\alpha^2}{(1-\alpha^2)}\sum_{i=1}^{K}{\bf{{\Delta}_i^d}^{\mathcal{H}}\bf {V_i^d}^{\mathcal{H}}{\bf{w}_i^d}^{\mathcal{H}}\bf{w}_i^d}\bf {V_i^d}\bf{{\Delta}_i^d}
+{{\frac{p^u}{(1-\alpha^2)p^d}}{\bf{G}^u}^{\mathcal{H}}{\bf{w}^u}^\mathcal{H}\bf{w}^u}\bf{G}^u}}}\notag\\ +\frac{{\sigma_k^d}^2}{(1-\alpha^2){p^d}}+\frac{{\sigma_k^s}^2}{\rho_k(1-\alpha^2){p^d}}]
\label{eq:d_sinr4}
\end{align}





 \hrulefill

 \end{figure*}

\begin{cora}
Assuming $\bf X$ is a multivariate Gaussian distribution matrix as $\mathcal{N}_d(\boldsymbol{\mu}_i,\boldsymbol \sigma)$ then $\bf \bf X^\mathcal{H} \bf Q
\bf X$ has a Wishart distribution denoted as $\bf \bf X^\mathcal{H} \bf Q
\bf X\approx W_d (r,\boldsymbol{\sigma},\boldsymbol{\delta})$, where $\bf Q$ is a symmetric $d\times d$ matrix and $i \in r$ is the number of the independent collumn s of $\bf X^{\mathcal{H}}$, and 
\begin{align}
\boldsymbol \delta = \boldsymbol\sigma^{-1/2}\bf M^{\mathcal{H}} \bf Q \bf M \boldsymbol\sigma^{-1/2}, \\ 
\bf M =(\boldsymbol{\mu_1},\boldsymbol{\mu_2},...,\boldsymbol{\mu_r}).
\end{align}
\end{cora}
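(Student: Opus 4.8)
The plan is to reduce the quadratic form $\mathbf{X}^{\mathcal{H}}\mathbf{Q}\mathbf{X}$ to a sum of independent rank-one Gaussian outer products, which is exactly the construction defining a noncentral Wishart matrix, and then to read off the degrees of freedom, the scale matrix, and the noncentrality parameter. Throughout I take the $r$ independent columns of $\mathbf{X}^{\mathcal{H}}$ (equivalently, the rows of $\mathbf{X}$) to be the observations, so that $\mathbf{M}=\mathbb{E}[\mathbf{X}]$ collects the means $\boldsymbol{\mu}_i$ and the target covariance is $\boldsymbol{\sigma}$.

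First I would standardize the covariance. Writing $\mathbf{X}=\mathbf{M}+\mathbf{Z}\boldsymbol{\sigma}^{1/2}$, where the rows of $\mathbf{Z}$ are i.i.d.\ $\mathcal{C}\mathcal{N}(\mathbf{0},\mathbf{I})$, substituting into $\mathbf{X}^{\mathcal{H}}\mathbf{Q}\mathbf{X}$ pulls the scale $\boldsymbol{\sigma}^{1/2}$ symmetrically outside the stochastic part and leaves $\mathbf{M}$ as the sole carrier of the mean. This is what later separates the $\boldsymbol{\sigma}$ scale of the Wishart from the $\boldsymbol{\delta}$ noncentrality.

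Second --- and this is the crux --- I would make the structural hypothesis on $\mathbf{Q}$ explicit. Symmetry by itself is insufficient: diagonalising $\mathbf{Q}=\mathbf{U}\boldsymbol{\Lambda}\mathbf{U}^{\mathcal{H}}$ and setting $\mathbf{Y}=\mathbf{U}^{\mathcal{H}}\mathbf{X}$ gives $\mathbf{X}^{\mathcal{H}}\mathbf{Q}\mathbf{X}=\sum_{j}\lambda_j\,\mathbf{y}_j\mathbf{y}_j^{\mathcal{H}}$, a \emph{weighted} sum of independent Gaussian outer products, which collapses to a single Wishart only when every nonzero eigenvalue equals one, i.e.\ when $\mathbf{Q}$ is idempotent. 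This is precisely the rank-$d_k$ projector $\mathbf{I}_{N_r}-\mathbf{A}_i\mathbf{A}_i^{\mathcal{H}}$ used in the proofs of Propositions~1 and~2, so I would add $\mathbf{Q}^2=\mathbf{Q}$ to the hypotheses (equivalently, in the correlated setting, require $\boldsymbol{\sigma}\mathbf{Q}$ idempotent as in \cite{ArthurFrederickSeber2008}, 21.6) and factor $\mathbf{Q}=\mathbf{B}\mathbf{B}^{\mathcal{H}}$ with $\mathbf{B}^{\mathcal{H}}\mathbf{B}=\mathbf{I}_r$, where $\mathbf{B}$ collects the $r=\mathrm{rank}(\mathbf{Q})$ unit eigenvectors for the eigenvalue one.

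Third, I would recognise the Wishart and identify its parameters. Setting $\mathbf{W}=\mathbf{B}^{\mathcal{H}}\mathbf{X}$ gives $\mathbf{X}^{\mathcal{H}}\mathbf{Q}\mathbf{X}=\mathbf{W}^{\mathcal{H}}\mathbf{W}=\sum_{j=1}^{r}\mathbf{w}_j\mathbf{w}_j^{\mathcal{H}}$; because $\mathbf{B}$ has orthonormal columns, the transformed rows of $\mathbf{W}$ remain independent and $\mathcal{C}\mathcal{N}(\cdot,\boldsymbol{\sigma})$-distributed, so by the defining construction $\mathbf{W}^{\mathcal{H}}\mathbf{W}\sim W_d(r,\boldsymbol{\sigma},\boldsymbol{\delta})$ with $r$ degrees of freedom and scale $\boldsymbol{\sigma}$. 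The noncentrality follows from $\mathbb{E}[\mathbf{W}]=\mathbf{B}^{\mathcal{H}}\mathbf{M}$ together with $\mathbf{B}\mathbf{B}^{\mathcal{H}}=\mathbf{Q}$, since the auxiliary factor $\mathbf{B}$ then cancels:
\begin{align}
\boldsymbol{\delta}=\boldsymbol{\sigma}^{-1/2}(\mathbb{E}[\mathbf{W}])^{\mathcal{H}}(\mathbb{E}[\mathbf{W}])\,\boldsymbol{\sigma}^{-1/2}=\boldsymbol{\sigma}^{-1/2}\mathbf{M}^{\mathcal{H}}\mathbf{Q}\mathbf{M}\,\boldsymbol{\sigma}^{-1/2},
\end{align}
which is the claimed expression and coincides with \cite{citeulike:624837} (Theorem~3.3.11). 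The main obstacle is therefore not computational but structural: I expect the real work to lie in justifying that $\mathbf{Q}$ must in fact be a projection --- without idempotency the conclusion degrades to a weighted sum of Wisharts and the stated single-Wishart form fails --- and in verifying that the orthonormal change of variables $\mathbf{W}=\mathbf{B}^{\mathcal{H}}\mathbf{X}$ preserves both the independence and the common covariance $\boldsymbol{\sigma}$ of the rows.
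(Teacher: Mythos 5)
Your proof is correct, but there is nothing in the paper to compare it against step by step: the corollary is stated without proof, as an import from the cited literature (\cite{citeulike:624837}, Theorem 3.3.11; \cite{ArthurFrederickSeber2008}, 21.6), and is then invoked in the propositions that follow. What your argument adds is therefore twofold. First, the constructive reduction itself: splitting $\mathbf{X}=\mathbf{M}+\mathbf{Z}\boldsymbol{\sigma}^{1/2}$, factoring $\mathbf{Q}=\mathbf{B}\mathbf{B}^{\mathcal{H}}$ with $\mathbf{B}^{\mathcal{H}}\mathbf{B}=\mathbf{I}_r$, and verifying that $\mathbf{W}=\mathbf{B}^{\mathcal{H}}\mathbf{X}$ still has independent rows with common covariance $\boldsymbol{\sigma}$ (cross-covariance $(\mathbf{B}^{\mathcal{H}}\mathbf{B})_{jk}\boldsymbol{\sigma}=\delta_{jk}\boldsymbol{\sigma}$) is exactly the standard proof of the theorem the paper is implicitly citing, and your noncentrality computation recovers the stated $\boldsymbol{\delta}=\boldsymbol{\sigma}^{-1/2}\mathbf{M}^{\mathcal{H}}\mathbf{Q}\mathbf{M}\boldsymbol{\sigma}^{-1/2}$. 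Second, and more importantly, you correctly identify that the corollary as printed is false under its stated hypotheses: symmetry of $\mathbf{Q}$ alone does not suffice, since for instance $\mathbf{Q}=\mathrm{diag}(2,1,0,\dots,0)$ yields $2\mathbf{x}_1\mathbf{x}_1^{\mathcal{H}}+\mathbf{x}_2\mathbf{x}_2^{\mathcal{H}}$, a weighted sum of rank-one terms that is not Wishart with scale $\boldsymbol{\sigma}$. Idempotency $\mathbf{Q}^2=\mathbf{Q}$ is the missing hypothesis, and with it the degrees of freedom must be read as $r=\mathrm{rank}(\mathbf{Q})=\mathrm{tr}(\mathbf{Q})$ rather than ``the number of independent columns of $\mathbf{X}^{\mathcal{H}}$'' (the two coincide only for $\mathbf{Q}=\mathbf{I}$). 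Both corrections are consistent with how the corollary is actually used in the paper, where $\mathbf{Q}$ is always a projector --- $\mathbf{I}_{N_r}-\mathbf{A}_i\mathbf{A}_i^{\mathcal{H}}$ of rank $d_k$ in Propositions~1 and~2, and the beamforming projector in Proposition~3 --- so your strengthened statement is the one the paper needs. A final cosmetic point your proof silently fixes: $\mathbf{Q}$ must be conformable with the observation index (the columns of $\mathbf{X}^{\mathcal{H}}$), not ``$d\times d$'' as printed, and in the complex setting ``symmetric'' should read Hermitian.
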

\begin{prep}
For multi user downlink FD MIMO system with perfect CSI and the processed SINR follows a matrix variate Beta type $II$ distribution with parameters $(N1,N2)$ and defined as,
\begin{align}
&\boldsymbol\gamma_k^d\approx B_{N_u}^{II}(N1,N2)\sim \notag \\ &\frac{\det(\boldsymbol \gamma_k^d)^{(\frac{2N1-N_u-1}{2})}\det{(\bf I_{qk}+\boldsymbol{\gamma}_k^d)^{{-N1-N2}}}}{\beta( N1,N2)},
\label{gamma_d_p}
\end{align}
where\vspace{-1em}
\begin{align}
\beta(N1,N2)= \frac{\Gamma_{N_u}(N1)\Gamma_{N_u}(N2)}{\Gamma_{N_u}({N1+N2})},
\end{align}
and $\Gamma_{N_u}(x)$ is the multivariate Gamma function given as,
\begin{align}
\Gamma_{N_u}(x)=\pi^{{N_u}({N_u}-1)/4}\prod_{i=1}^{N_u}\Gamma(x-(i-1)/2).
\end{align}
\end{prep}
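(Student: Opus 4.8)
The plan is to read eq.~\ref{eq:d_sinr} as a matrix quotient of two quadratic forms in the complex Gaussian downlink channel $\bf F_k$ and to show that the numerator and the aggregated denominator are each complex central Wishart, so that their matrix ratio is matrix variate Beta type $II$ as in eq.~\ref{gamma_d_p}. First I would write the numerator as $\bf F_k^{\mathcal{H}}\bf Q_1\bf F_k$ with the symmetric positive-semidefinite kernel $\bf Q_1=\bf{V_k^d}^{\mathcal{H}}{\bf w_k^d}^{\mathcal{H}}\bf w_k^d\bf V_k^d$, and invoke the preceding Corollary: since $\bf F_k$ is zero-mean complex Gaussian with identity covariance, the Corollary returns a central $W_{N_u}(N1,\boldsymbol\sigma)$ whose degrees of freedom $N1$ equal the rank of $\bf Q_1$, i.e.\ the dimension of the effective signal subspace surviving antenna selection and beamforming.

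Next I would collect the bracketed denominator of eq.~\ref{eq:d_sinr} into a single effective Wishart. The co-user interference is itself a quadratic form $\bf F_k^{\mathcal{H}}\bf Q_2\bf F_k$ in the \emph{same} channel $\bf F_k$, with $\bf Q_2=\sum_{i\neq k}\bf{V_i^d}^{\mathcal{H}}{\bf w_i^d}^{\mathcal{H}}\bf w_i^d\bf V_i^d$, while the self-interference term carries the independent channel $\bf G^u$ and the noise enters as scalars. The delicate point is that the numerator and the co-user interference share $\bf F_k$, so their independence is not automatic: I would establish it through the matrix form of Cochran's/Craig's theorem, by which $\bf F_k^{\mathcal{H}}\bf Q_1\bf F_k$ and $\bf F_k^{\mathcal{H}}\bf Q_2\bf F_k$ are independent Wisharts when $\bf Q_1\bf Q_2=\bf O$, i.e.\ when the desired and interfering selection-beamforming operators span complementary subspaces. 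Folding the independent $\bf G^u$ contribution and the noise into this term, and using closure of the Wishart family under independent summation with a common covariance, collapses the whole bracket into one central $W_{N_u}(N2,\boldsymbol\sigma)$.

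The final step invokes the standard ratio result: for independent $\bf S_1\sim W_{N_u}(N1,\boldsymbol\sigma)$ and $\bf S_2\sim W_{N_u}(N2,\boldsymbol\sigma)$ with common covariance, the matrix $\bf S_2^{-1/2}\bf S_1\bf S_2^{-1/2}$ follows $B_{N_u}^{II}(N1,N2)$ with density exactly eq.~\ref{gamma_d_p}. Matching the determinant exponent $(2N1-N_u-1)/2$ and the power $-(N1+N2)$ on $\det(\bf I_{qk}+\boldsymbol\gamma_k^d)$ fixes the two shape parameters, the factor of two in $2N1$ being the conversion from complex to real degrees of freedom (consistent with the $2N_r$ of the Wishart propositions above); crucially the common $\boldsymbol\sigma$ cancels in the ratio, so only $N1$, $N2$ and $N_u$ remain.

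The hard part is the denominator aggregation. Because signal and co-user interference ride on the same $\bf F_k$, independence rests on the orthogonality $\bf Q_1\bf Q_2=\bf O$, which holds only approximately under practical antenna selection and beamforming; moreover the additive noise scalars and the independent $\bf G^u$ self-interference are not central Wishart with the same covariance, so the denominator is only approximately one central Wishart. This is precisely where the closed-form \emph{approximation} advertised in the abstract enters, and I would justify it by a moment-matching choice of the effective covariance and of $N2$, together with an interference-limited (high-SNR) treatment of the noise, checking that the first two moments of the aggregate interference-plus-noise match those of the target Wishart.
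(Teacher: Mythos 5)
Your proposal follows essentially the same route as the paper's proof: recognize the numerator $\boldsymbol\Phi$ of eq.~\eqref{eq:d_sinr} as a central Wishart via the quadratic-form/projection argument, collapse the bracketed interference-plus-noise term $\boldsymbol\Psi$ into a single effective Wishart with identity-proportional covariance, and conclude via the matrix-ratio result that $\boldsymbol\Psi^{-1/2}\boldsymbol\Phi\boldsymbol\Psi^{-1/2}$ is matrix variate Beta type II, with shape parameters fixed by moment matching. Two differences are worth recording. First, you explicitly confront the independence problem created by the numerator and the co-user interference riding on the \emph{same} channel $\bf F_k$, and you propose the Craig/Cochran condition $\bf Q_1\bf Q_2=\bf O$ to resolve it; the paper never raises this point and silently treats $\boldsymbol\Phi$ and $\boldsymbol\Psi$ as independent, so on this step your sketch is more careful than the published argument, and your admission that the orthogonality holds only approximately is the honest source of the ``approximation'' in the statement. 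Second, where you leave the denominator aggregation as an unspecified moment-matching step, the paper's proof consists almost entirely of carrying it out explicitly: a Satterthwaite-type match of the weighted sum $\sum_{i}\bf W_i^s+\frac{p^u}{p^d}\bf W^p$ to $\boldsymbol\eta^s\bf W^q$ with effective degrees of freedom $N_s$, a second match absorbing the noise $\boldsymbol\sigma_k^{d0}$ to obtain $\boldsymbol\Psi=\boldsymbol\eta^v\bf W^v$ with $2N_v$ degrees of freedom, Olkin's symmetric square-root ratio result, and finally a first/second-moment match yielding the closed-form expressions for $N1$ and $N2$; a complete proof must reproduce these formulas rather than assert their existence. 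One bookkeeping caution: you read the Corollary as putting the rank of $\bf Q_1$ into the degrees of freedom (the standard convention), whereas the paper puts the Gaussian dimension $2N_t$ there and the projection rank $q_k$ into the covariance $\bf I_{q_k}$; reconcile the two conventions before matching determinant exponents, or your $N1$ and $N2$ will come out inconsistent with eq.~\eqref{gamma_d_p}.
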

\begin{proof}
For simplicity we denote the received SINR of MU-MIMO with energy harvesting for $k_{th}$ user at  eq.\ref{eq:d_sinr} as $\boldsymbol\gamma_k^d=\frac{\boldsymbol\Phi}{\boldsymbol\Psi}$ and ${\boldsymbol\sigma_k^{d_0}}=\frac{{\boldsymbol\sigma_k^d}^2}{{p^d}}+\frac{{\boldsymbol\sigma_k^s}^2}{\rho_k{p^d}}$. Since , $\bf{F}_k \in \mathbb{C}^{N_u\times N_t}$ is a complex Gaussian random vector with zero mean and unit variance, $\bf{F}_k^{\mathcal{H}} \bf {F}_k $ follows a central Wishart distribution with $\bf I_{q_k}$ covariance matrix and $2N_t$ degree of freedom as, 
Therefore, considering $\bf{w}_k^d \in \mathbb{C}^{N_u\times N_t} $,   $ \bf {V_k^d}^{\mathcal{H}}{w_k^d}^{\mathcal{H}} \bf w_k^d \bf V_k^d$ as a projection matrix of $ q_k$, then  $\bf{F}_k^{\mathcal{H}}(\bf {V_k^d}^{\mathcal{H}} \bf {w_k^d}^{\mathcal{H}} \bf w_k^d \bf V_k^d)\bf F_k$ distributed as Wishart distribution with $\bf I_{q_k}$ covariance matrix and $2N_t$ degrees of freedom. Therefore, $\Phi$ has a central multivariate matrix Wishart distribution with $\bf I_{q_k}$ covariance matrix and $2N_t$ degrees of freedom as follows,
 \begin{align}
&\boldsymbol\Phi=\bf W^d\sim \bf \mathcal W_{N_u}(2N_t,\bf I_{qk})=\notag\\&\frac{{\gamma_{k}^{d}}^{(2N_t-N_u-1)/2}exp\left[-\frac{1}{2}tr(\bf ( I_{q_k})^{-1}\gamma_{k}^d)\right]}{2^{N_u N_t}\Gamma_{N_{u}}(N_t)\det( \bf I_{q_k})^{N_t}},
\end{align}
where $\bf \mathcal W_x(y,z)$ is a central multivariate Wishart distribution. Furthermore, the the first term of $\Psi$ in eq.\ref{eq:d_sinr}, is a sum of central Wishart distribution with similar identity covariance matrices can be simplified as follow,
\begin{align}
&\sum_{i=1, i\neq k}^{K}{\bf{F}_k^{\mathcal{H}}\bf {V_k^d}^{\mathcal{H}}{\bf{w}_i^d}^{\mathcal{H}}\bf{w}_i^d\bf{V}_i^d}\bf{F}_k= \notag \\  &\sum _{i=1,i\neq k}^K \bf W^l_i =\bf W^s\sim \bf \mathcal W_{N_u}(2N_t(K-1),\bf I_{qk}),
\end{align}
where $\bf W_i^l \sim \bf \mathcal W_{N_t}(2N_t,I_{qk})$ has a multivariate Wishart distribution with $\bf I_{q_k}$ covariance matrix and $2N_u$ degrees of freedom. In general, if the covariance matrices are linear proportional to identity matrix, then the sum of Wishart random variable follows a semi-correlated Wishart distribution \cite{Ivrlac2003}. However if the covariance matrices are not proportional to the identity matrix, then deriving the sum of Wishart distribution is nontrivial \cite{Kumar2014}. The second term also is a Wishart distribution as follow, 
\begin{align}
{{\bf{G}^u}^{\mathcal{H}}{\bf{w}^u}^\mathcal{H}\bf{w}^u}\bf{G}^u=\bf W^p\sim  \bf \mathcal {W}_{N_u}(N_tK,\bf I_{qk}).
\end{align}
Therefore, the sum of two wighted central Wishart random variable where approximation for the distribution can be obtained as follows,  
\begin{align}
&\sum_{i=1, i\neq k}^{K}{\bf{F}_k^{\mathcal{H}}\bf {V_k^d}^{\mathcal{H}}{\bf{w}_i^d}^{\mathcal{H}}\bf{w}_i^d}\bf {V_k^d}\bf{F}_k+{{\frac{p^u}{p^d}}{\bf{G}^u}^{\mathcal{H}}{\bf{w}^u}^\mathcal{H}\bf{w}^u}\bf{G}^u \notag \\ & \sum _{i=1,i\neq k}^K \bf W^s_i+ \frac{p^u}{p^d} \bf{W}^p= \boldsymbol\eta^s \bf W^q,
\end{align}
where, $\boldsymbol\eta^s=\frac{K(1+(\frac{p^u}{p^d})^2)-1}{K(1+\frac{p^u}{p^d})-1}$ and $\bf W^q \sim \bf \mathcal W_{N_u}(N_s,\bf I_{qk})$ with the following degree of freedom,
\begin{align}
N_s=\frac{N_t(\frac{p^u}{p^d}K+K-1)^2}{(\frac{p^u}{p^d})^2K+K-1},
\end{align}
therefore $\boldsymbol\Psi$ is given by,
\begin{align}
&\boldsymbol \Psi = \boldsymbol\eta^s \bf W^q+ \boldsymbol \sigma_k^{d0}= \boldsymbol \eta_k^v W_k^v,
\end{align}
where, $\boldsymbol\eta^v=\frac{\boldsymbol\eta^s N_s}{N_s+\sigma_k^{d0}}$ and $\bf W_k^v \sim \bf \mathcal W_{N_u}(2N_v,\bf I_{qk})$ with the following degree of freedom,
\begin{align}
N_v=N_s/2+\frac{\sigma_k^{d0}(2N_s+\sigma_k^{d0})}{2N_s}.
\end{align}
According to the definition of $\boldsymbol \Phi$ and $\boldsymbol \Psi$, Olkin in \cite{Ivrlac2003} show that if the $\boldsymbol \Psi^{1/2}$ is selected as a symmetric square root of  $\boldsymbol \Psi$ and also if the covariance matrices of $\boldsymbol \Phi$ and $\boldsymbol \Psi$, are proportional to identity matrix then we have the relationship among them as 
\begin{align}
\boldsymbol\gamma_k^d \!\!=\!\! \frac{\boldsymbol \Phi}{\boldsymbol \Psi}\!\!=\!\!\boldsymbol \Psi^{-1/2}\boldsymbol \Phi \boldsymbol \Psi^{-1/2}\!\!=\!\!\frac{1}{\boldsymbol \eta^v}\bf {W^v}^{-1/2}{W^d}{W^v}^{-1/2}.
\end{align}
\end{proof}
From \cite{citeulike:624837}, $\bf {W^v}^{-1/2}{W^d}{W^v}^{-1/2}\sim \bf B_{N_u}^{II}(N_t,N_v)$ distributed as a multivariate Betta type II distribution. By invoking the first and second moment, the distribution of $\boldsymbol\gamma_k^d\sim \bf B_{N_t}^{II}(N1,N2)$, where $N1$ and $N2$ are the degrees of freedom given as follows,
\begin{align}
&N1=\frac{N_t(N_t+(N_v-2)\eta^v+1)}{\eta^v(N_t+N_v-1)},\\ & N2=\frac{N_v(N_t-3\eta^v+2)+N_v^2\eta^v+2(\eta^v-1)}{N_t+N_v-1},
\end{align}
and the pdf of the distribution is given in eq. \ref{gamma_d_p}.
\subsection{Downlink with imperfect CSI }
Similar to the uplink case, we consider imperfect CSI for CSI. The actual
and the estimated channel are defined as, 
\begin{align}
\bf F_k=\sqrt{1-\alpha^2}\bf \hat{F}_k+\alpha \bf \Delta_k^d,
\end{align}
where similarly, $\hat{F}_k\sim \mathcal{C}\mathcal{N}(0,\bf I)$ is the downlink imperfect estimated channel with zero mean and unit variance and $\Delta_k^d \sim \mathcal{C}\mathcal{N}(0, \bf I)$ is the estimated downlink channel Gaussian noise. The received signal at the $k_{th}$ SU will be as follows,
\begin{align} 
\bf{y}_{k}^{d}\triangleq&~\underbrace {\sqrt{p^d}{(\sqrt{1-\alpha^2}\bf \hat{F}_k+\alpha \bf \Delta_k^d)\bf {V_k^d}\bf{w}_k^d\bf{s}_k^d}} \notag \\[2pt]&+\,\underbrace {\displaystyle \sum _{i\neq k, i \in K} \sqrt{p^d}{(\sqrt{1-\alpha^2}\bf \hat{F}_i+\alpha \bf \Delta_i^d)\bf {V_i^d}\bf{w}_i^d\bf{s}_i^d}}_{{\sf\text { DL interference}}} \notag \\[2pt]&+\,\underbrace {\displaystyle \sqrt{p^u}{\bf{G}^u\bf{w}^u\bf{s}^u}}_{{\sf\text { UL intracell interference}}} + \bf{n}_k^d. 
\end{align}
Similar to the perfect CSI case the received ID and EH signal elements are respectively denoted as,
\begin{align}
\textbf{y}_k^{di}=&\sqrt{\rho_k} [ {\sqrt{p^d}{(\sqrt{1-\alpha^2}\bf \hat{F}_k+\alpha \bf \Delta_k^d)\bf {V_k^d}\bf{w}_k^d\bf{s}_k^d}} +\notag \\& {\displaystyle \sum _{i\neq k, i \in K} \sqrt{p^d}{(\sqrt{1-\alpha^2}\bf \hat{F}_i+\alpha \bf \Delta_i^d)\bf {V_i^d}\bf{w}_i^d\bf{s}_i^d}} \notag \\&+\, {\displaystyle \sqrt{p^u}{\bf{G}^u\bf{w}^u\bf{s}^u}}]+\bf{n}_k^s,
\label{eq:d_i}
\\
\textbf{y}_k^{de}=&\sqrt{1-\rho_k}[ {\sqrt{p^d}{(\sqrt{1-\alpha^2}\bf \hat{F}_k+\alpha \bf \Delta_k^d)\bf {V_k^d}\bf{w}_k^d\bf{s}_k^d}} +\notag \\ & {\displaystyle \sum _{i\neq k, i \in K} \sqrt{p^d}{(\sqrt{1-\alpha^2}\bf \hat{F}_i+\alpha \bf \Delta_i^d)\bf {V_i^d}\bf{w}_i^d\bf{s}_i^d}} \notag \\&+\, {\displaystyle \sqrt{p^u}{\bf{G}^u\bf{w}^u\bf{s}^u}}+\textbf{n}_k^d].
\label{eq:d_e}
\end{align}
\begin{prep}
For multi user downlink FD MIMO system with imperfect CSI and the processed SINR follows a matrix variate Beta type $II$ distribution with parameters $(a,b)$ and defined as,
\begin{align}
&\boldsymbol\gamma_k^d\approx B_{N_u}^{II}(N1,N2)\sim \notag \\ &\frac{\det(\boldsymbol \gamma_k^d)^{(\frac{2N1-N_u-1}{2})}\det{(\bf I_{\hat q_k}+\boldsymbol{\gamma}_k^d)^{{-N1-N2}}}}{\beta( N1,N2)},
\label{gamma_d}
\end{align}
where \vspace{-1em}
\begin{align}
\beta(N1,N2)= \frac{\Gamma_{N_u}(N1)\Gamma_{N_u}(N2)}{\Gamma_{N_u}({N1+N2})},
\end{align}
and $\Gamma_{N_u}(x)$ is the multivariate Gamma function given as,
\begin{align}
\Gamma_{N_u}(x)=\pi^{{N_u}({N_u}-1)/4}\prod_{i=1}^{N_u}\Gamma(x-(i-1)/2).
\end{align}
\end{prep}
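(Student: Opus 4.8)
The plan is to mirror the perfect-CSI derivation of the preceding Proposition, writing the imperfect-CSI SINR of eq.~\ref{eq:d_sinr4} as a ratio $\boldsymbol\gamma_k^d=\boldsymbol\Phi/\boldsymbol\Psi$ and identifying both numerator and denominator as (approximately) central Wishart matrices before invoking the ratio-to-Beta result. First I would treat the numerator: since $\hat{\mathbf{F}}_k\sim\mathcal{C}\mathcal{N}(0,\mathbf{I})$ and $\mathbf{V}_k^{d\mathcal{H}}\mathbf{w}_k^{d\mathcal{H}}\mathbf{w}_k^d\mathbf{V}_k^d$ is a rank-$\hat q_k$ projection matrix, Corollary~1 yields that $\boldsymbol\Phi=\hat{\mathbf{F}}_k^{\mathcal{H}}(\mathbf{V}_k^{d\mathcal{H}}\mathbf{w}_k^{d\mathcal{H}}\mathbf{w}_k^d\mathbf{V}_k^d)\hat{\mathbf{F}}_k$ is central Wishart $\mathcal{W}_{N_u}(2N_t,\mathbf{I}_{\hat q_k})$, exactly as in the perfect-CSI case with the true channel $\mathbf{F}_k$ replaced by its estimate $\hat{\mathbf{F}}_k$.

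The genuinely new feature is the denominator of eq.~\ref{eq:d_sinr4}, which now carries an extra channel-estimation-error term $\frac{\alpha^2}{1-\alpha^2}\sum_i\boldsymbol\Delta_i^{d\mathcal{H}}\mathbf{V}_i^{d\mathcal{H}}\mathbf{w}_i^{d\mathcal{H}}\mathbf{w}_i^d\mathbf{V}_i^d\boldsymbol\Delta_i^d$ alongside the DL interference, UL self-interference, and noise contributions already handled in the perfect-CSI proof. Because $\boldsymbol\Delta_i^d\sim\mathcal{C}\mathcal{N}(0,\mathbf{I})$ is i.i.d. with zero mean and unit variance, each such quadratic form is again central Wishart with covariance proportional to the identity, so the whole bracketed expression is a weighted sum of central Wisharts with differing proportionality constants $1$, $\alpha^2/(1-\alpha^2)$ and $p^u/[(1-\alpha^2)p^d]$. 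I would then re-apply the semi-correlated-Wishart moment-matching approximation of \cite{Ivrlac2003} used before, together with the noise floor, to collapse this sum into a single scaled Wishart $\boldsymbol\Psi=\boldsymbol\eta_k^v\mathbf{W}_k^v$ with $\mathbf{W}_k^v\sim\mathcal{W}_{N_u}(2N_v,\mathbf{I}_{\hat q_k})$, where the effective weight $\boldsymbol\eta^v$ and degrees of freedom $N_v$ now absorb the additional estimation-error Wishart into the count.

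With $\boldsymbol\Phi$ and $\boldsymbol\Psi$ both expressed as central Wisharts with identity-proportional covariances, Olkin's result \cite{Ivrlac2003} gives $\boldsymbol\gamma_k^d=\boldsymbol\Psi^{-1/2}\boldsymbol\Phi\boldsymbol\Psi^{-1/2}=\frac{1}{\boldsymbol\eta^v}{\mathbf{W}^v}^{-1/2}\mathbf{W}^d{\mathbf{W}^v}^{-1/2}$, and by \cite{citeulike:624837} this ratio is matrix-variate Beta type II, $\mathbf{B}_{N_u}^{II}(N_t,N_v)$; matching the first and second moments then yields the stated degrees of freedom $(N1,N2)$ and produces the density of eq.~\ref{gamma_d}. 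The step I expect to be the main obstacle is precisely this denominator approximation: the estimation-error term enters with weight $\alpha^2/(1-\alpha^2)$, distinct from the weights on the true-channel interference, so the summed Wisharts have covariances that are unequal multiples of $\mathbf{I}$, and justifying that a single effective pair $(\boldsymbol\eta^v,N_v)$ captures the sum accurately while recomputing the moment-matched parameters is the delicate part, as flagged for non-proportional covariances in \cite{Kumar2014}. A secondary subtlety, inherited from the perfect-CSI case, is that $\hat{\mathbf{F}}_k$ appears in both $\boldsymbol\Phi$ and the leading interference term of $\boldsymbol\Psi$, so the Beta-ratio construction treats as independent two quantities that are in fact correlated; this is the approximation underlying the $\approx$ in the statement.
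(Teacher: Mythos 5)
Your proposal follows essentially the same route as the paper's proof: write $\boldsymbol\gamma_k^d$ from eq.~\ref{eq:d_sinr4} as a ratio of (approximately) central Wisharts, absorb the extra estimation-error quadratic form with weight $\alpha^2/(1-\alpha^2)$ together with the interference and noise terms into the denominator via moment-matched scaled-Wishart approximations, and then invoke the Olkin/ratio result to get the matrix-variate Beta type II law with moment-matched parameters $(N1,N2)$. The paper merely carries out the collapse you describe in explicit sequential pairwise stages (DL plus UL interference into $\hat{\boldsymbol\eta}^g\hat{\mathbf{W}}^g$ with $N_g$ degrees of freedom, then the $\boldsymbol\Delta$-term into $\hat{\boldsymbol\eta}^q\hat{\mathbf{W}}^q$ with $N_q$, then the noise floor into $\hat{\boldsymbol\eta}^v\hat{\mathbf{W}}^v$ with $N_v$), which is exactly the staging, and the same caveats, that you flag.
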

\begin{figure}[t!]
\centering
    \includegraphics[width=0.52\textwidth]{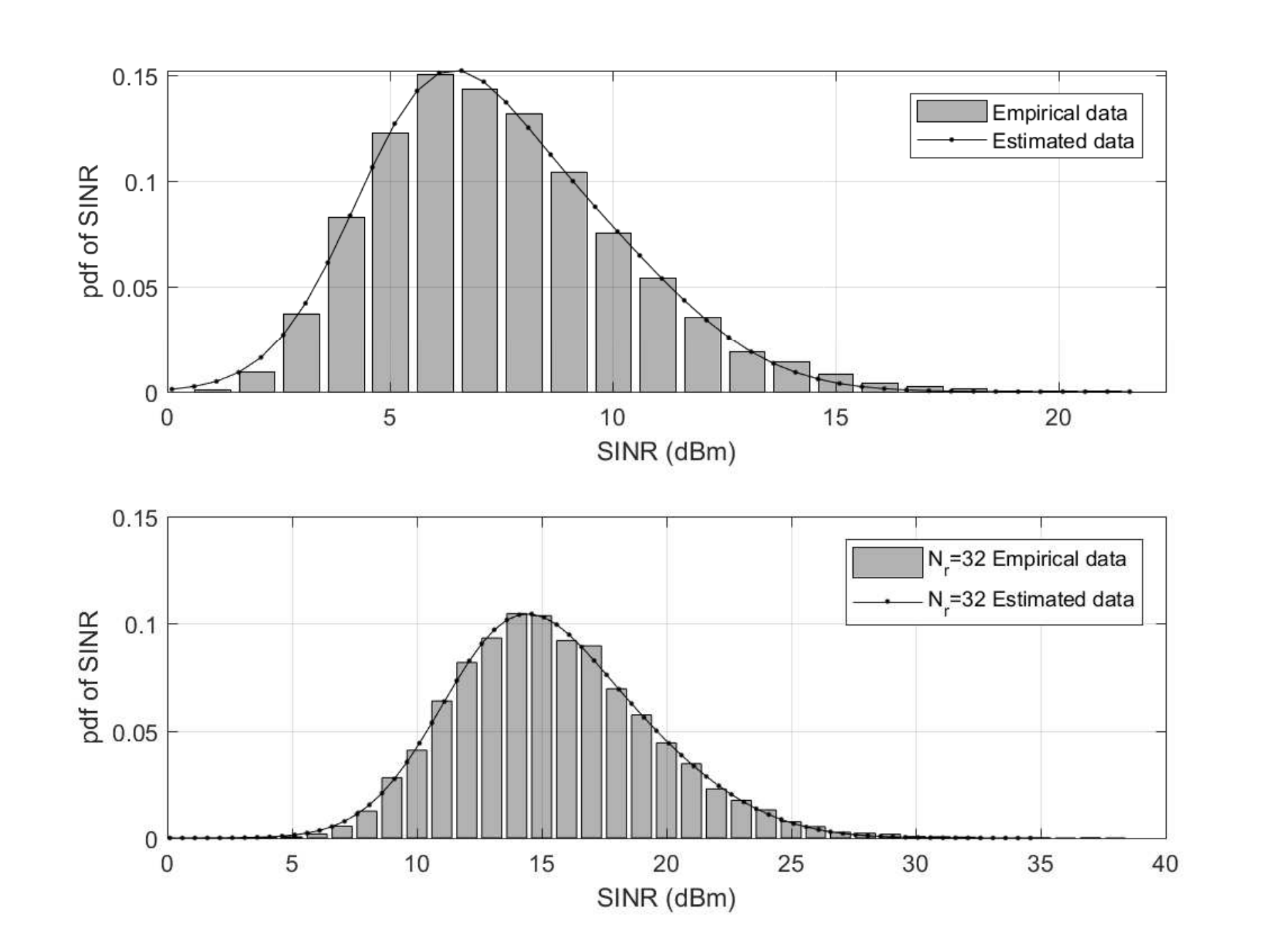}
  \caption{PDF of perfect CSI uplink SINR.}
  \label{fig:4}
\centering
    \includegraphics[width=0.52\textwidth]{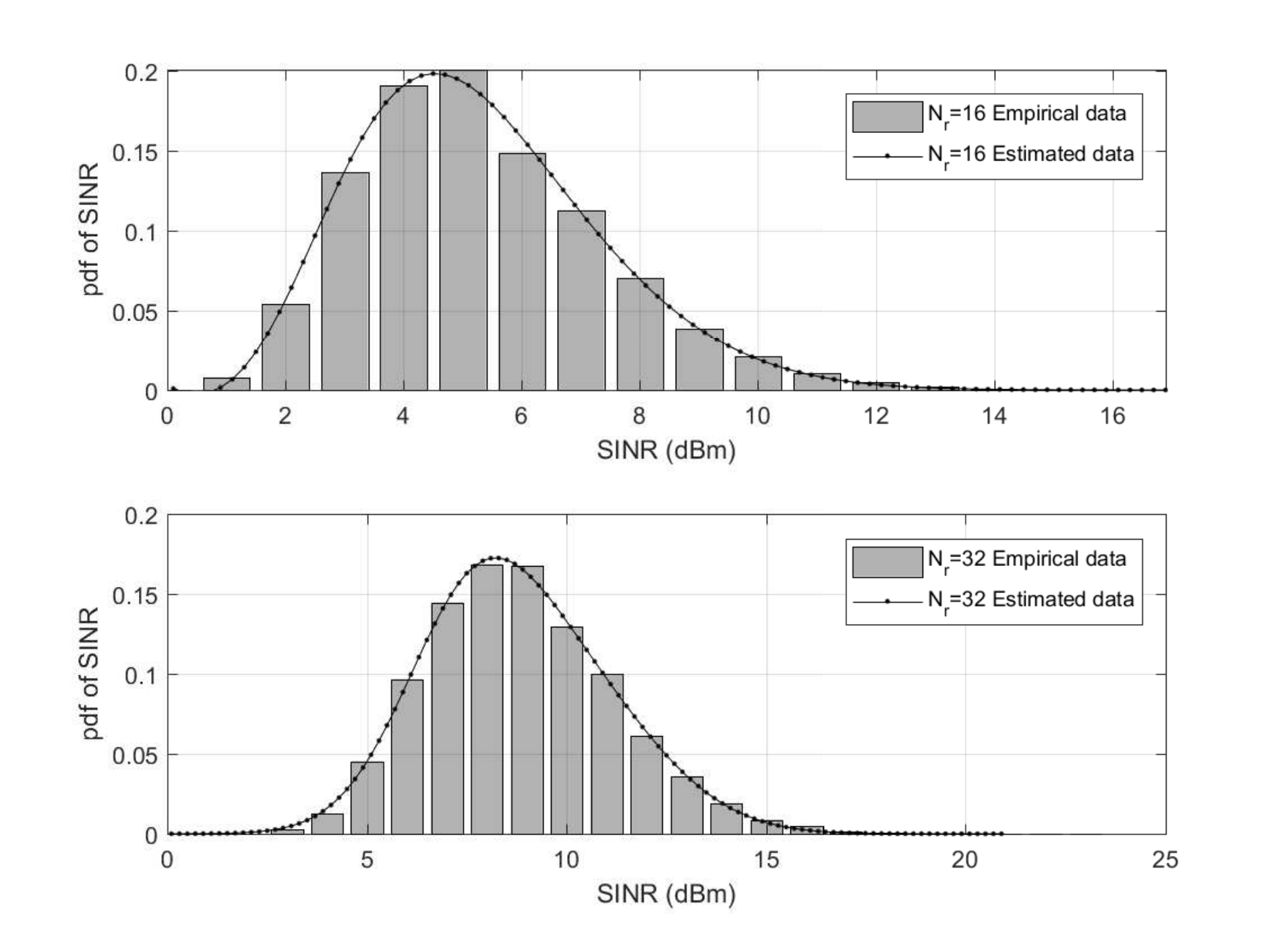}
  \caption{PDF of imperfect CSI uplink SINR.}
  \label{fig:5}
 \end{figure}
\begin{figure}[t!]
\centering
    \includegraphics[width=0.52\textwidth]{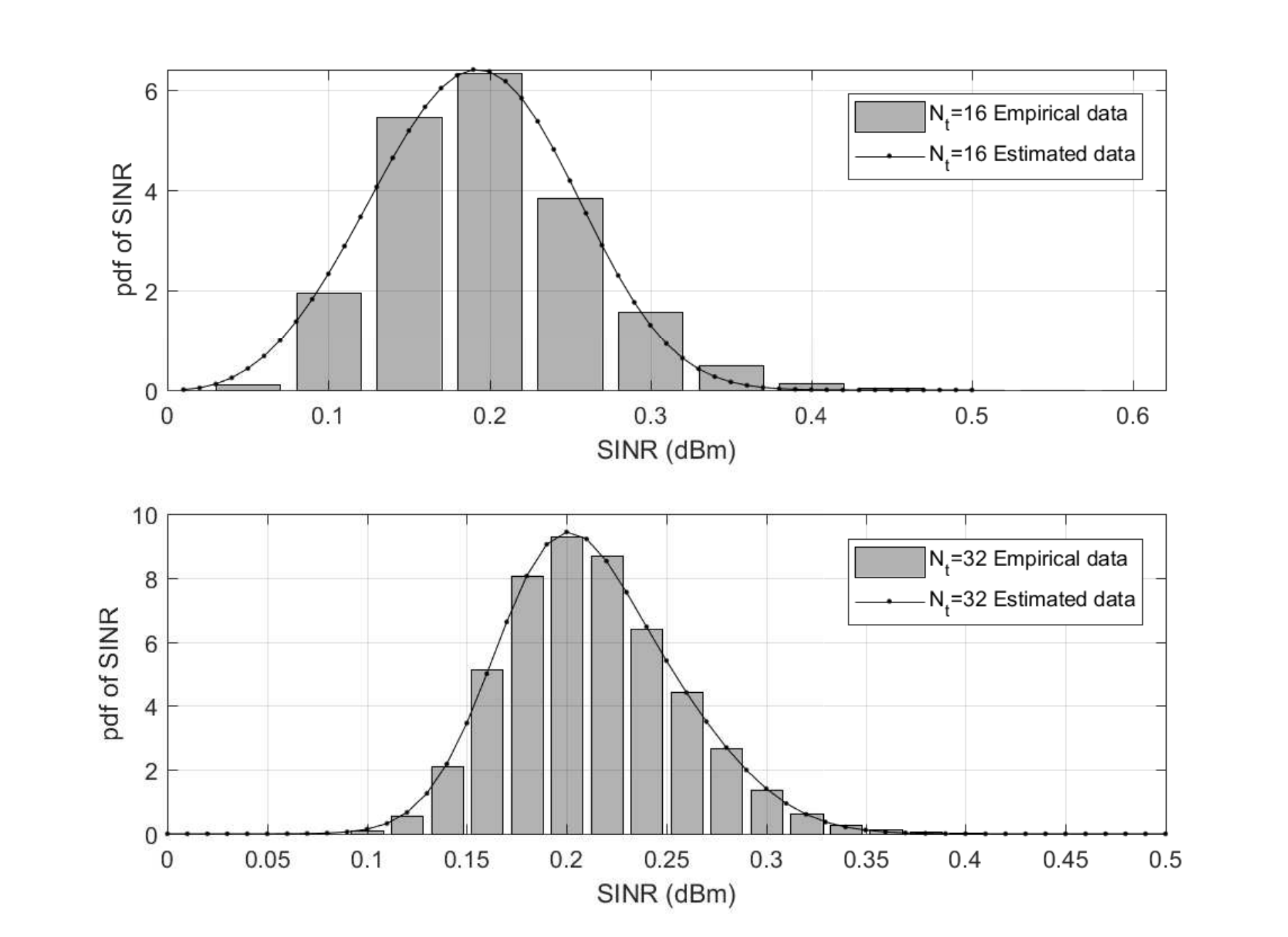}
  \caption{PDF of perfect CSI downlink SINR.}
  \label{fig:4}
\end{figure}
\begin{figure}[t!]
\centering
    \includegraphics[width=0.52\textwidth]{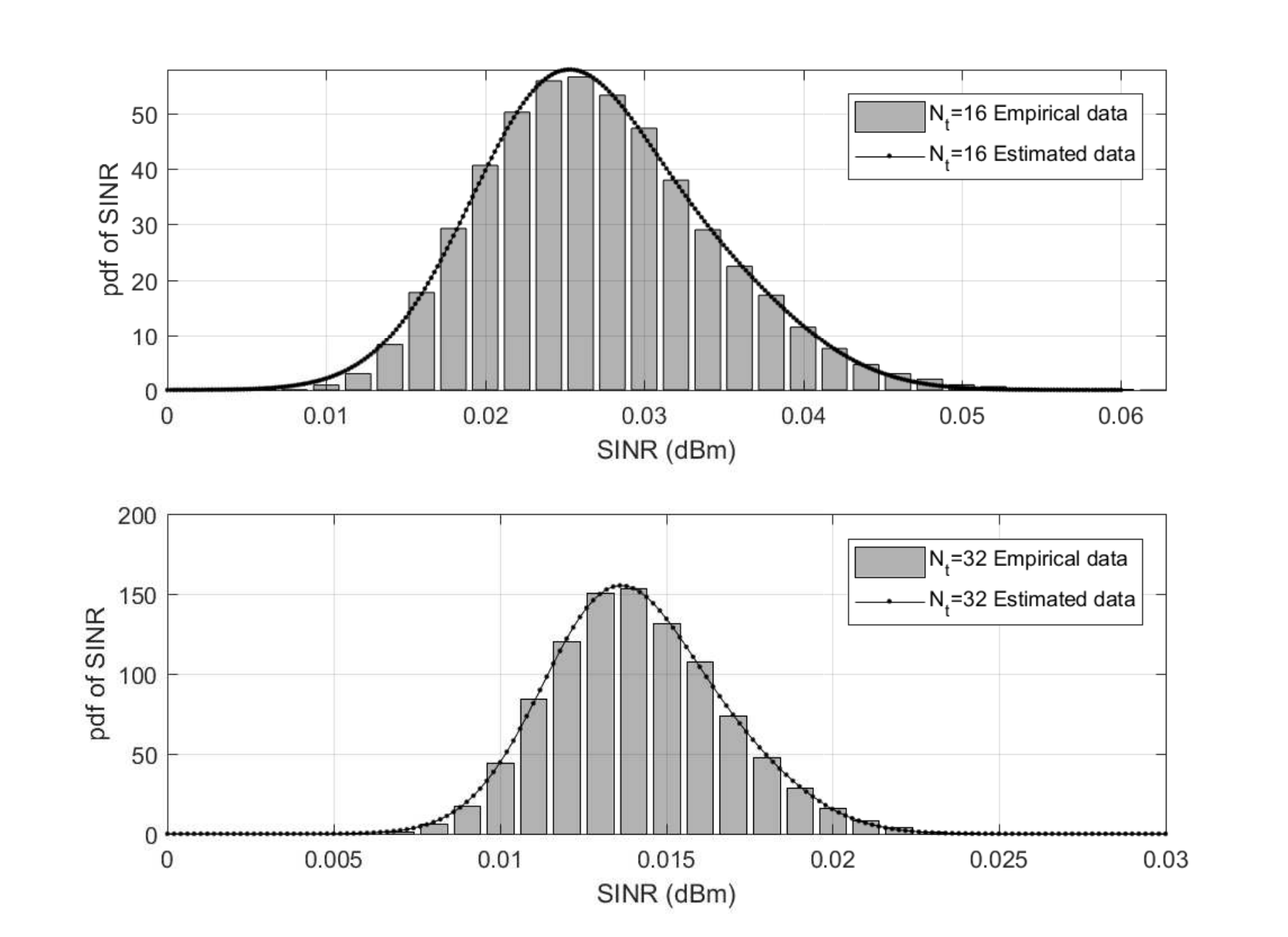}
  \caption{PDF of imperfect CSI downlink SINR.}
  \label{fig:5}
 \end{figure}
\begin{proof}
Similar to the perfect case, for simplicity we denote the received SINR of MU-MIMO with energy harvesting for $k_{th}$ user at  eq.\ref{eq:d_sinr4} as $\boldsymbol\gamma_k^d=\frac{\boldsymbol\hat \Phi}{\boldsymbol\hat\Psi}$ and ${\boldsymbol\hat\sigma_k^{d_0}}=\frac{{\boldsymbol\sigma_k^d}^2}{(1-\alpha^2){p^d}}+\frac{{\boldsymbol\sigma_k^s}^2}{\rho_k(1-\alpha^2){p^d}}$. Therefore, $\hat \Phi_k$ has a central multivariate matrix Wishart distribution with $\bf I_{\hat q_k}$ covariance matrix and $2N_t$ degrees of freedom as follows,
 \begin{align}
&\boldsymbol\hat\Phi_k=\bf \hat W_k^d\sim \bf \mathcal W_{N_u}(2N_t,\bf I_{\hat q_k})=\notag\\&\frac{{\gamma_{k}^{d}}^{(2N_t-N_u-1)/2}exp\left[-\frac{1}{2}tr(\bf ( I_{\hat q_k})^{-1}\gamma_{k}^d)\right]}{2^{N_u N_t}\Gamma_{N_{u}}(N_t)\det( \bf I_{\hat q_k})^{N_t}},
\end{align}
where $\bf \mathcal W_x(y,z)$ is a central multivariate Wishart distribution. Furthermore, the the first term of $\hat\Psi$ in eq.\ref{eq:d_sinr4}, is a sum of central Wishart distribution with similar identity covariance matrices can be simplified as follow,\vspace{-0.5em}
\begin{align}
&\sum_{i=1, i\neq k}^{K}{\bf{\hat F}_k^{\mathcal{H}}\bf {V_k^d}^{\mathcal{H}}{\bf{w}_i^d}^{\mathcal{H}}\bf{w}_i^d\bf{V}_i^d}\bf{\hat F}_k=\bf \hat W^s,
\end{align}
where $\bf \hat W^s \sim \bf \mathcal W_{N_u}(2N_t(K-1),\bf I_{\hat q_k})$ has a multivariate Wishart distribution.
Similarly, in the second term of $\hat \Psi$ in eq.\ref{eq:d_sinr4}, is a sum of central Wishart distribution with similar identity covariance matrices can be simplified as follow,\vspace{-0.5em}
\begin{align}
&\sum_{i=1}^{K}{\bf{{\Delta}_i^d}^{\mathcal{H}}\bf {V_i^d}^{\mathcal{H}}{\bf{w}_i^d}^{\mathcal{H}}\bf{w}_i^d}\bf {V_i^d}\bf{{\Delta}_i^d}= \bf \hat W^{\Delta}
\end{align}
where $\bf\hat W^{\Delta} \sim \bf \mathcal W_{N_u}(2N_tK,\bf I_{\hat q_k})$ has a multivariate Wishart distribution with $\bf I_{\hat q_k}$ covariance matrix and $2N_t$ degrees of freedom.
The third term also is a Wishart distribution as follow, \vspace{-0.5em}
\begin{align}
{{\bf{G}^u}^{\mathcal{H}}{\bf{w}^u}^\mathcal{H}\bf{w}^u}\bf{G}^u=\bf \hat W^p\sim  \bf \mathcal {W}_{N_u}(2N_tK,\bf I_{\hat q_k}).
\end{align}
Therefore, the sum of three wighted central Wishart random variable where approximation for the distribution can be obtained as follows,  
\begin{align}
&\sum_{i=1, i\neq k}^{K}{\bf{\hat F}_k^{\mathcal{H}}\bf {V_k^d}^{\mathcal{H}}{\bf{w}_i^d}^{\mathcal{H}}\bf{w}_i^d}\bf {V_k^d}\bf{\hat F}_k+{{\frac{p^u}{p^d}}{\bf{G}^u}^{\mathcal{H}}{\bf{w}^u}^\mathcal{H}\bf{w}^u}\bf{G}^u \notag \\ & \sum _{i=1,i\neq k}^K \bf\hat W^s_i+ \frac{p^u}{(1-\alpha^2)p^d} \bf\hat {W}^p= \boldsymbol\hat\eta^g \bf \hat W^g,
\end{align}
where, $\boldsymbol\hat\eta^g=\frac{K(1+(\frac{p^u}{(1-\alpha^2)p^d})^2)-1}{K(1+\frac{p^u}{p^d})-1}$ and $\bf \hat W^g \sim \bf \mathcal W_{N_u}(N_g,\bf I_{\hat q_k})$ with the following degree of freedom,
\begin{align}
N_g=\frac{2N_t(\frac{p^u}{(1-\alpha^2)p^d}K+K-1)^2}{(\frac{p^u}{(1-\alpha^2)p^d})^2K+K-1},
\end{align}
and  \vspace{-1em}
\begin{align}
\boldsymbol\hat \eta^g \bf \hat W^g+\frac{\alpha^2}{1-\alpha^2}\bf \hat W^{\boldsymbol \Delta}= \boldsymbol\hat\eta^q \bf \hat W^q,
\end{align}
where, $\boldsymbol\hat\eta^q=\frac{N_g\hat{\boldsymbol{\eta}^g}^2+2N_tK(\frac{\alpha^2}{1-\alpha^2})^2}{N_g\hat{\boldsymbol\eta^r}+2N_tK\frac{\alpha^2}{1-\alpha^2}}$ and $\bf \hat W^q \sim \bf \mathcal W_{N_u}(N_q,\bf I_{\hat q_k})$ with the following degree of freedom,
\begin{align}
N_q=\frac{(\frac{2N_tK\alpha^2}{(1-\alpha^2)}+N_g\boldsymbol\hat\eta^g)^2}{2N_tK(\frac{\alpha^2}{(1-\alpha^2)})^2+N_g\hat{\boldsymbol\hat\eta^r}^2},
\end{align}
therefore $\boldsymbol\hat\Psi$ is given by,
\begin{align}
&\boldsymbol \hat\Psi = \boldsymbol\hat\eta^q \bf\hat W^q+ \boldsymbol \hat\sigma_k^{d0}= \boldsymbol \hat\eta^v \hat W^v,
\end{align}
where, $\boldsymbol\hat\eta^v=\frac{\boldsymbol\hat\eta^q N_q}{N_q+\hat\sigma_k^{d0}}$ and $\bf\hat W^v \sim \bf \mathcal W_{N_u}(2N_v,\bf I_{\hat q_k})$ with the following degree of freedom,
\begin{align}
N_v=N_q/2+\frac{\hat\sigma_k^{d0}(2N_q+\hat\sigma_k^{d0})}{2N_q}. 
\end{align}
\end{proof}
Similar to the perfect CSI case, we have,
\begin{align}
&N1=\frac{N_t(N_t+(N_v-2)\hat\eta_k^v+1)}{\hat\eta_k^v(N_t+N_v-1)},\\ & N2=\frac{N_v(N_t-3\hat\eta_k^v+2)+N_v^2\hat\eta_k^v+2(\hat\eta_k^v-1)}{N_t+N_v-1},
\end{align}
and the pdf of the distribution is given in eq. \ref{gamma_d}.
\section{SIMULATION RESULTS}
In this section, we study the performance of the proposed closed-form approximation of the SINR of the downlink SWIPT-MIMO  with antenna selection and the uplink MU-MIMO with ZFBF.  we assume that the number of receive and transmit antennas at the AGG is equal and larger than the number of SUs. The total bandwidth is 10 MHz. We also consider 3 SUs facilitated with 2 transceiver antennas. The channel uncertainty is considered as $\alpha=0.2$. The power split factor is assumed to be $\rho=0.3$ and the power efficiency at the EH unit is $\eta_k=40\%$ for all SUs and the harvested energy is stored in a 3.2 v 20Ah battery. 
Fig. \ref{fig:4} and Fig. \ref{fig:5} show the empirical data and chi-square distribution as a simplified form of the proposed Wishart distribution for a different number of receive antennas for perfect an imperfect CSI. Note that the number of received antennas increases, the distributions $pdf$ moves to the higher SINR region and the estimation becomes more accurate.  
Fig. \ref{fig:4} and Fig. \ref{fig:5} show the empirical data and Beta distribution of type II as a simplified form of the proposed multivariate Beta distribution of type II  for a different number of receive antennas of the downlink SWIPT MU-MIMO with antenna selection for perfect an imperfect CSI. Numerical results show that the proposed closed-form approximation offers a perfect match with the empirical data in a wide range of the number of antennas.
\section{Conclusion}
In this paper a new low complexity closed-form approximation of the Full-duplex SWIPT MU-MIMO system SINR distribution for the received signal at the sensor users with perfect and imperfect CSI and transmit antenna selection scheme at the transmitter is proposed. SINR distribution of the uplink with ZFBF and antenna selection at the AGG receiver with transmit antenna selection also studied. The uplink SINR with perfect and imperfect CSI is modeled with multivariate Wishart distributions and the downlink SINR with perfect and imperfect CSI is modeled by a multivariate Beta type II distribution. The proposed SINR distributions analytical results are compared to the  Monte-Carlo simulations and we obtained a perfect match.
\bibliographystyle{IEEEtran}
{\footnotesize
\bibliography{IEEEabrv,Ref_EE_FD}}

\end{document}